\definecolor{dullmagenta}{rgb}{0.4,0,0.4}   
\definecolor{darkblue}{rgb}{0,0,0.4}
\newtheorem{theorem}{Theorem}
\newtheorem{lemma}[theorem]{Lemma}
\newcommand{\Pain}[1]{\text{P}_{\mathrm{#1}}}
\newcommand{\dPain}[1]{\text{P}\left(\mathrm{#1}\right)}
\theoremstyle{definition}
\theoremstyle{remark}
\newtheorem{remark}[theorem]{Remark}
\numberwithin{equation}{section}
\begin{document}

{\noindent\Large\bf Gap Probabilities in the Laguerre Unitary Ensemble and Discrete Painlev\'e Equations}
\medskip
\begin{flushleft}

\textbf{Jie Hu}\\
Faculty of Science and Technology, Department of Mathematics, University of Macau, 
AE11 Avenida da Universidade, Taipa, Macau, China\\
E-mail: \href{mailto:hujie_0610@163.com}{\texttt{hujie\_0610@163.com}}\\[5pt] 

\textbf{Anton Dzhamay}\\
School of Mathematical Sciences, The University of Northern Colorado, Greeley, CO 80639, USA\\
E-mail: \href{mailto:anton.dzhamay@unco.edu}{\texttt{anton.dzhamay@unco.edu}}\\[5pt]

\textbf{Yang Chen}\\
Faculty of Science and Technology, Department of Mathematics, University of Macau, 
E11 Avenida da Universidade, Taipa, Macau, China\\
E-mail: \href{mailto:yayangchen@umac.mo}{\texttt{yayangchen@umac.mo}}\\[5pt] 

\emph{Keywords}: orthogonal polynomials, Askey-Wilson scheme, Painlev\'e equations, difference equations,
birational transformations.\\[3pt]

\emph{MSC2010}: 333C47, 34M55, 39A99, 42C05, 3D45, 34M55, 34M56, 14E07, 39A13

\end{flushleft}

\begin{flushright}\it
	To the memory of Jon Nimmo
\end{flushright}

\date{\today}

\begin{abstract}

In this paper we study a certain recurrence relation, that can be used to generate ladder operators for the 
Laguerre Unitary ensemble, from the point of view of Sakai's geometric theory of Painlev\'e equations. 
On one hand, this gives us one more detailed example of the appearance of discrete Painlev\'e equations
in the theory of orthogonal polynomials. On the other hand, it serves as a good illustration of the effectiveness 
of a recently proposed procedure on how to reduce such recurrences to some canonical discrete Painlev\'e
equations.

\end{abstract}

\section{Introduction} 
\label{sec:introduction}

By now it is clear that there are many fundamental connections between the theory of Random Matrices, Orthogonal Polynomials, 
and Painlev\'e Equations, both differential and discrete. Some conceptual understanding of this fact has been given in a series
of papers by Alexei Borodin and his collaborators \cite{AriBor:2006:MSDDPE,AriBor:2007:TDITP,Bor:2003:DGPADPE} and 
especially \cite{BorBoy:2003:DOTFPIDOPE}, see also a recent monograph of Walter Van Assche, \cite{Van:2018:OPAPE}. In a way, 
the geometric setting of Hidetaka Sakai's theory of Painlev\'e equations \cite{Sak:2001:RSAWARSGPE} seems to provide
the natural framework for questions involving the study of various orthogonal polynomial ensembles, and so it is not 
surprising that various objects of interest, such as the \emph{gap probabilities}, coefficients for \emph{three-term recurrence
relations}, or \emph{ladder operators}, can be described in terms of solution of either differential or discrete 
Painlev\'e equations. 

The purpose of the present paper is to study, from the geometric perspective of Sakai's theory, an example of a 
recurrence relation obtained by Shulin Lyu and Yang Chen in their study
of the largest eigenvalue distribution \cite{LyuChe:2017:LEDLUE} for the \emph{Laguerre Unitary Ensemble}, focusing on the 
reduction of this recurrence to a canonical form following step-by-step procedure recently proposed in \cite{DzhFilSto:2019:RCDOPHWDPE}.

Thus, we let the weight function be $w(x;\alpha) = x^{\alpha} \exp (-x)$, where $x>0$  and $\alpha>-1$ is a parameter and consider 
a family of monic polynomials 
\begin{equation}
	P_{j}(x,t) = x^{j} + p_{j-1}(t)x^{j-1} + \cdots + p_{0}(t)
\end{equation}
that are orthogonal with respect to the weight $w(x;\alpha)$ on the interval $[0,t]$, $0<t\leq\infty$, i.e.,
\begin{equation}
\int_{0}^{t} P_{n}(x,t) P_{m}(x,t) x^{\alpha} \exp (-x) dx = \delta_{n,m} h_{n}(t),	
\end{equation}
where $h_{n}(t)$ is the square of the $L^2\left([0,t];x^{\alpha} \exp (-x) dx\right)$ norm of $P_{n}(x,t)$. This 
unitary ensemble is called the \emph{Laguerre Unitary Ensemble} (or \emph{LUE} for short) since 
when $t=\infty$,  the family $\{P_{n}(x) = P_{n}(x,\infty)\}$ is the well-known family of monic Laguerre polynomials orthogonal 
w.r.t.~the weight $x^{\alpha} \exp (-x)$.

As usual, the orthogonality condition immediately implies the three term recurrence relations
\begin{equation*}
  x P_{n}(x,t) = P_{n+1} (x,t) + \alpha_{n}(t) P_{n}(x,t) + \beta_{n}(t) P_{n-1}(x,t),\qquad n\geq0,
\end{equation*}
with initial conditions $P_{-1}(x,t) =0$, $P_0(x,t)=1$.

The paper \cite{LyuChe:2017:LEDLUE} is concerned with the study of the probability $\mathbb{P}(n,t)$ that the largest 
eigenvalue in LUE on $[0,\infty)$ is not larger than $t$, where $n$ is the size of the corresponding random matrix. 
This probability can be computed as 
\begin{equation*}
	\mathbb{P}(n,t)=\frac{D_{n}(t)}{D_{n}(\infty)},
\end{equation*}
where 
\begin{equation*}
	D_{n}(t):=\det \left[ \int_{0}^{t} x^{j+k} x^{\alpha} \exp (-x) dx \right]_{0\leq j,k\leq n-1}
\end{equation*}
is the $n\times n$ \emph{Hankel determinant}, a fundamental object in the theory of orthogonal polynomials \cite{Sze:1967:OP},
that can be evaluated as $D_{n}(t) = \prod_{j=0}^{n-1} h_j(t)$, and 
\begin{equation*}
	D_n(\infty)= \frac{G(n+1)G(n+\alpha+1)}{\Gamma(\alpha+1)},\quad\text{where $G(.)$ is the Barnes $G$-function.}
\end{equation*}

One way to study (and generate) the family of orthogonal polynomials $\{P_{n}(x,t)\}$ is to use the lowering and raising 
ladder operators,
\begin{align*}
	\left(\frac{\textrm{d}}{\textrm{d}z}+B_n(z,t)\right)P_n(z,t)&= \beta_n(t) A_n(z,t) P_{n-1}(z,t), \\
	\left(\frac{\textrm{d}}{\textrm{d}z}-B_{n}(z,t)-v'(z)\right)P_{n-1}(z,t)&= -A_{n-1}(z,t)P_{n}(z,t),
\end{align*}
where $A_n(z,t)$ and $B_n(z,t)$ can be parameterized by the functions $R_{n}(t)$ and $r_{n}(t)$,
\begin{equation*}
	A_n(z,t)=\frac{R_n(t)}{z-t}+\frac{1-R_{n}(t)}{z},\qquad B_n(z,t)=\frac{r_n(t)}{z-t} - \frac{r_n(t)+n}{z},
\end{equation*}
and where 
\begin{equation*}
	R_n(t):= -\frac{P_n^2(t,t)}{h_n(t)}t^{\alpha}e^{-t},\qquad r_n(t):= -\frac{P_{n}(t,t)P_{n-1}(t,t)}{h_{n-1}(t)}t^{\alpha} e^{-t},\qquad
	\text{and }P_j(t,t):=P_j(z,t)|_{z=t}.
\end{equation*}

Let us now introduce a different parameterization $x_{n}(t), y_{n}(t)$ via
\begin{equation*}
	x_n(t)=1-\frac{1}{R_{n-1}(t)},\qquad y_n(t)= -r_n(t).
\end{equation*}
Then Lyu and Chen \cite[Remark~2.3]{LyuChe:2017:LEDLUE} showed that these variables satisfy the following recurrence relations in $n$:
\begin{equation}\label{eq:xyn-evol}
	\left\{ \begin{aligned}
		x_{n} x_{n+1} &= \frac{(y_{n} - n)(y_{n}-(n+ \alpha))}{y_{n}^{2}},\\
		y_{n} + y_{n-1} &= - \frac{(-t + 2n - 1 + \alpha) x_{n} - (2n - 1 + \alpha)}{(x_{n} - 1)^{2}}.
	\end{aligned}\right.
\end{equation}
This is the recurrence that we are interested in studying. We show, following 
the step-by-step procedure of \cite{DzhFilSto:2019:RCDOPHWDPE}, that this recurrence is a discrete Painlev\'e equation that is equivalent to one of the standard 
examples in the d-$\dPain{A_{3}^{(1)}/D_{5}^{(1)}}$ family. Our main result is the following Theorem.

\begin{theorem}\label{thm:coordinate-change}
	The recurrence \eqref{eq:xyn-evol} is equivalent to the standard discrete Painlev\'e  
	equation~\eqref{eq:dPD5-KNY} written in 
	\cite{KajNouYam:2017:GAOPE}. This equivalence is achieved 
	via the following change of variables:
	\begin{equation}\label{eq:xy2qp}
			x(q,p) = \frac{q(p+t) + a_{2}}{q p + a_{2}},\qquad
			y(q,p) = \frac{(p + t)(q p + a_{2})}{t}.
	\end{equation}
	The inverse change of variables is given by	
	\begin{equation}\label{eq:qp2xy}
		q(x,y) = \frac{(x-1)(x-1)y + n}{tx},\qquad
		p(x,y) = \frac{t(y - n)}{(x-1)y + n}.
	\end{equation} 
	The relationship between the Laguerre weight recurrence parameters and the root variables of discrete Painlev\'e equations is
	given by
	\begin{equation}
		a_{0} = n+\alpha,\quad a_{1} = - n,\quad a_{2} = n,\quad a_{3} = 1 - n - \alpha.
	\end{equation}
\end{theorem}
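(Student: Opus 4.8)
The plan is to follow the three-stage geometric reduction of \cite{DzhFilSto:2019:RCDOPHWDPE}: realize \eqref{eq:xyn-evol} as a birational dynamical system on a rational surface, identify that surface together with the induced action on its Picard lattice as a member of the $A_3^{(1)}/D_5^{(1)}$ family, and then read off the coordinate change that conjugates it to the KNY standard form \eqref{eq:dPD5-KNY}.

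First I would recast \eqref{eq:xyn-evol} as an explicit non-autonomous birational map. Taking $(f,g)=(x_n,y_n)$ as affine coordinates on $\mathbb{P}^1\times\mathbb{P}^1$, the first relation expresses $x_{n+1}$ rationally in terms of $(f,g)$, while the second relation with $n\mapsto n+1$ then determines $y_{n+1}$; composing these gives a single map $\varphi_n\colon(x_n,y_n)\mapsto(x_{n+1},y_{n+1})$. After compactifying, I would locate the indeterminacy points of $\varphi_n$ and $\varphi_n^{-1}$, which sit on the special fibers $g\in\{0,\infty,n,n+\alpha\}$ and $f\in\{0,1,\infty\}$ where the numerators and denominators in \eqref{eq:xyn-evol} degenerate. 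Following the procedure through the resulting cascades, I expect to resolve eight base points (some infinitely near) by successive blow-ups, obtaining a rational surface $X$ on which $\varphi_n$ lifts to an isomorphism onto the surface with shifted parameters.

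Next I would compute the anticanonical divisor of $X$ and show that its irreducible components form a cycle of four $(-2)$-curves, i.e.\ a configuration of type $A_3^{(1)}$; by Sakai's classification this places $X$ in the $A_3^{(1)}/D_5^{(1)}$ family and supplies the four root variables through the period map attached to those components. Computing the pushforward $\varphi_{n*}$ on $\mathrm{Pic}(X)$ and realizing it inside the affine Weyl group of type $D_5^{(1)}$, I would check that it is precisely the translation defining the standard equation, acting on the root variables by $(a_0,a_1,a_2,a_3)\mapsto(a_0+1,a_1-1,a_2+1,a_3-1)$. This is consistent with the shift $n\mapsto n+1$ under the proposed identification $a_0=n+\alpha$, $a_1=-n$, $a_2=n$, $a_3=1-n-\alpha$, for which $a_0+a_1+a_2+a_3=1$, as required for an $A_3^{(1)}$ surface with unit step.

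The main obstacle, and the genuinely creative step, is the explicit matching of $X$ with the surface underlying \eqref{eq:dPD5-KNY}: both carry the same abstract $A_3^{(1)}/D_5^{(1)}$ structure, but pinning down an isomorphism requires aligning the two eight-point blow-up configurations and the four anticanonical components, and checking that the translation directions (not merely the surface types) agree. Carrying out this alignment fixes a birational map between the two copies of $\mathbb{P}^1\times\mathbb{P}^1$, which I expect to be exactly \eqref{eq:xy2qp}. Once this candidate is obtained, the remainder is routine verification: substituting \eqref{eq:xy2qp} into \eqref{eq:dPD5-KNY} should reproduce \eqref{eq:xyn-evol} under the parameter dictionary above, and a direct computation should confirm that \eqref{eq:qp2xy} inverts \eqref{eq:xy2qp}. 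I would present the geometry to explain the origin of \eqref{eq:xy2qp} and record these algebraic checks as confirmation.
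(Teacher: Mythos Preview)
Your overall strategy matches the paper's, but you have the roles of the two root systems reversed. In Sakai's notation $A_{3}^{(1)}/D_{5}^{(1)}$, the \emph{surface} type is $D_{5}^{(1)}$: the irreducible components of the anticanonical divisor form a $D_{5}^{(1)}$ configuration (six components, two of which appear with multiplicity two in $-K_{\mathcal{X}}$), not a four-cycle of type $A_{3}^{(1)}$. Correspondingly, the symmetry group in which the dynamics lives is the extended affine Weyl group $\widetilde{W}\left(A_{3}^{(1)}\right)$, not $W\left(D_{5}^{(1)}\right)$; the four root variables $a_{0},\dots,a_{3}$ parametrize the $A_{3}^{(1)}$ symmetry roots, which are \emph{orthogonal} to the surface roots. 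So your step ``compute the anticanonical divisor of $X$ and show that its irreducible components form a cycle of four $(-2)$-curves'' will not go through as stated --- you will instead find the $D_{5}^{(1)}$ tree, and the period map is computed along those components.

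There is a second, subtler issue that the paper handles explicitly and that your outline glosses over. After the initial identification of the $D_{5}^{(1)}$ surface roots with the standard configuration, the induced translation on the $A_{3}^{(1)}$ symmetry lattice turns out to be $\langle 0,-1,0,1\rangle\delta$, \emph{not} the KNY translation $\langle -1,1,-1,1\rangle\delta$ that you anticipate. These are conjugate via the simple reflection $w_{1}$, so the paper must compose its preliminary change of basis on $\operatorname{Pic}(\mathcal{X})$ with $w_{1}$ before the coordinate change \eqref{eq:xy2qp} can be read off. Your proposal flags matching the translation directions as the delicate step but assumes the match will be direct; in practice the conjugation by $w_{1}$ is required, and skipping it would lead you to a different birational change of variables that does not land on the standard form \eqref{eq:dPD5-KNY}.
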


\begin{remark} Note that for our recurrence the root variables are constrained by the condition $a_{1} + a_{2} = 0$ (or, equivalently,
	$a_{0} + a_{3} = 1$).		
\end{remark}
	
These recurrences then are particular combinations of elementary mappings that can be thought of as B\"acklund transformations of a
differential $\Pain{V}$ equation that is associated with the same geometry. This is not surprising, since, 
if we put $\sigma_n(t):=t\frac{\textrm{d}}{\textrm{d}t}\ln{\mathbb{P}(n,t)}$, then it can be shown that it is the 
$\sigma$ function of a particular Painlev\'e V equation.
Estelle Basor and Yang Chen \cite{BasChe:2009:PDFDLSLUE} gave an alternate derivation 
of this result without relying on the Christoffel-Darboux kernel (or the reproducing kernel). Note also that 
the quantity $S_n(t) = 1- 1/R_n(t)$, satisfies
\begin{equation*}
S_n''(t)= \frac{3 S_n(t) -1}{2(S^2_n(t)-S_n(t))} (S_n'(t))^2 - \frac{S_n'(t)}{t}-\frac{\alpha^2(S_n(t)-1)^2}{2S_n(t)t^2}+(2n+1+\alpha)\frac{S_n(t)}{t}-\frac{1}{2}\frac{S_n(t)(S_n(t)+1)}{S_n(t)-1},	
\end{equation*}
which is a $\Pain{V}$ with parameters
\begin{equation*}
	\alpha_5=0, \qquad \beta_5= -\frac{\alpha^2}{2},\qquad c_5= 2n+1+\alpha,\qquad d_5=-1/2.
\end{equation*}
The function $\beta_n'(t) = t r_n'(t)$ satisfies a rather large second order non-linear ordinary differential equation in $t$,
and we will not reproduce it here.


\section{The Identification Procedure} 
\label{sec:the_identification_procedure}

\subsection{The Singularity Structure} 
\label{sub:the_singularity_structure}
To determine whether a given second-order nonlinear (non-autonomous) recurrence relation is one of discrete Painlev\'e equations, 
see the recent survey \cite{KajNouYam:2017:GAOPE}, the first step is to understand the singularity structure of the mapping
defined by this recurrence relation.
As is very common in this class of examples, our recurrence relation defines two natural mappings, 
the \emph{forward mapping $\psi_{1}^{(n)}: (x_{n},y_{n})\mapsto (x_{n+1},y_{n})$} defined by solving the first equation in
\eqref{eq:xyn-evol} for $x_{n+1}$ and the 
\emph{backward mapping $\psi_{2}^{(n)}:(x_{n},y_{n})\mapsto (x_{n},y_{n-1})$} defined by solving the second equation in
\eqref{eq:xyn-evol} for $y_{n-1}$. We are interested in studying the composed mapping 
$\psi^{(n)} = \left(\psi_{2}^{(n+1)}\right)^{-1} \circ \psi_{1}^{(n)}: (x_{n},y_{n})\mapsto (x_{n+1},y_{n+1})$. 
We put $x:=x_{n}$, $\overline{x}: = x_{n+1}$, 
$y:=y_{n}$, $\overline{y}: = y_{n+1}$ and sometimes omit the index $n$
in the mapping notation. The map $\psi:(x,y)\mapsto (\overline{x},\overline{y})$ then becomes
\begin{equation}\label{eq:fwd}
	\left\{
	\begin{aligned}
		\overline{x} &= \frac{(y - n)(y - (n+\alpha))}{x y^{2}},\\
		\overline{y} &= \frac{-y(y-n)^{2}(y - (n + \alpha))^{2} + x y^{2} (y - n)(y - (n+\alpha))(t + 2y - 2n - 1 - \alpha)
	+ x^{2} y^{4} (2n + 1 + \alpha - y)}{\left((y - n)(y - (n + \alpha)) - x y^{2}\right)^{2}}.
	\end{aligned}
	\right.
\end{equation}
Compactifying the mapping from $\mathbb{C}^{2}$ to $\mathbb{P}^{1} \times \mathbb{P}^{1}$ by introducing the 
coordinates $X = 1/x$ and $Y = 1/y$, it is easy to see that there are four affine base points of the mapping, and as
we see below, it is convenient to label them as
\begin{equation*}
	q_{1}(0,n),\quad q_{2}(0,n+\alpha),\quad q_{3}(1,\infty), \quad q_{7}(\infty,0);
\end{equation*}
(for example, it is immediate that at $q_{1}$ and $q_{2}$ both the numerator and the denominator of the mapping 
vanish, other points are found in the same way in other charts).
We resolve base point singularities using the blowup procedure, see, e.g., \cite{Sha:2013:BAG1}. That is,
for each base point $q_{i}(x_{i},y_{i})$ we construct two new local charts $(u_{i},v_{i})$ and $(U_{i},V_{i})$ 
given by $x = x_{i} + u_{i} = x_{i} + U_{i} V_{i}$ and $y = y_{i} + u_{i} v_{i} =y_{i} +  V_{i}$. 
The coordinates $v_{i} = 1/U_{i}$ represent all possible slopes of lines passing through the point $q_{i}$, 
and so this variable change ``separates'' all curves passing through $q_{i}$ based on their
slopes. This change of variables is a bijection away from $q_{i}$, but the point $q_{i}$ is 
replaced by the $\mathbb{P}^{1}$-line of all possible slopes, 
called the \emph{central fiber} or the \emph{exceptional divisor} of the blowup. 
We denote this central fiber by $F_{i}$, it is given in the blowup charts 
by local equations $u_{i}=0$ and $V_{i} = 0$. We then extend the mapping to these new charts via the above coordinate
substitution, find and resolve new base points (those would only appear on the exceptional divisors 
$u_{i} = V_{i} = 0$) and continue this process until it terminates (it should, in the discrete Painlev\'e case).
We summarize the result in the following Lemma.

\begin{lemma}
	\label{lem:base-pt-laguerre}
The base points of the mapping \eqref{eq:fwd}  
are 
\begin{alignat}{2}\label{eq-base-pt-laguerre}
	&q_{1}(x=0,y=n),&\quad &q_{2}(x=0,y=n+\alpha),\\
	&q_{3}\left(x=1,Y = \frac{1}{y}= 0\right)&\leftarrow 
	&q_{4}\left(u_{3} = x-1 = 0,v_{3}= \frac{1}{y(x-1)} = 0\right)\notag\\
	&&\leftarrow &q_{5}\left(U_{4}=y(1 - x)^{2} = t,V_{4} = \frac{1}{y(x-1)} = 0\right)\notag\\
	&&\leftarrow &q_{6}\left(U_{5}=(x-1)y((x-1)^{2} y - t) = t(1 - 2n + t - \alpha), V_{5} = \frac{1}{y(x-1)} = 0\right),\notag\\
	&q_{7}\left( X = \frac{1}{x} = 0, y= 0\right)&\leftarrow &q_{8}\left(U_{7} = \frac{1}{xy} = 0, V_{7} = y = 0\right).\notag
\end{alignat}
Considering the inverse mapping does not add any new base points.
\end{lemma}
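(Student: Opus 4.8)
The plan is to verify the asserted base-point data directly from the explicit formulas in \eqref{eq:fwd} by carrying out the blowup algorithm described just before the Lemma, and then to cross-check the outcome against the number of blowups demanded by Sakai's classification. Throughout I write $\overline{x} = N_x/D_x$ and $\overline{y} = N_y/D_y^{2}$ with $D_y = (y-n)(y-(n+\alpha)) - xy^{2}$ read off from \eqref{eq:fwd}; a point is a base point precisely when the numerator and the denominator of an image coordinate vanish there simultaneously, producing a genuine $0/0$ indeterminacy.

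First I would locate the ``visible'' base points by examining $\psi$ in each of the four standard charts of $\mathbb{P}^{1}\times\mathbb{P}^{1}$, namely $(x,y)$, $(X,y)$, $(x,Y)$, $(X,Y)$ with $X=1/x$, $Y=1/y$. In the affine chart the indeterminacy at $q_{1}(0,n)$ and $q_{2}(0,n+\alpha)$ is immediate, since $(y-n)(y-(n+\alpha))$ and $xy^{2}$ both vanish. To capture the remaining two I would substitute $Y=1/y$ and clear denominators: the coefficient of $y^{5}$ in $N_y$ is exactly $-(x-1)^{2}$, so it degenerates precisely on $x=1$, exposing the $0/0$ behaviour of $\psi$ at $q_{3}(x=1,Y=0)$; an analogous computation in the chart $(X,y)$ (where $\overline{X}=xy^{2}/\bigl((y-n)(y-(n+\alpha))\bigr)$ becomes $0/0$ at $X=y=0$) yields $q_{7}(X=0,y=0)$. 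This produces the four initial base points.

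Next I would run the blowup cascades. For each $q_{i}(x_{i},y_{i})$ I introduce the two charts prescribed in the text, $u_{i}=x-x_{i},\ v_{i}=(y-y_{i})/(x-x_{i})$ and $U_{i}=(x-x_{i})/(y-y_{i}),\ V_{i}=y-y_{i}$ (with the obvious $X,Y$-modifications at infinity), substitute into $\psi$, cancel the common factor coming from the exceptional divisor, and inspect the strict transform along the central fiber $u_{i}=0$ (resp. $V_{i}=0$) for a fresh $0/0$. I expect $q_{1}$ and $q_{2}$ to be resolved by a single blowup each, since after substituting $y=n+xv_{1}$ the denominator of $\overline{x}$ becomes $(n+xv_{1})^{2}$, which is nonzero on $x=0$. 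For $q_{3}$ the degeneration of the $y^{5}$-coefficient forces a length-four cascade: blowing up $q_{3}$ reveals $q_{4}$ on $F_{3}$, then $q_{5}$ on $F_{4}$ and $q_{6}$ on $F_{5}$, with the base points appearing at the nonzero fiber positions $U_{4}=y(1-x)^{2}=t$ and $U_{5}=(x-1)y((x-1)^{2}y-t)=t(1-2n+t-\alpha)$ dictated by the cancellation of the leading coefficient at each stage; at $q_{6}$ the strict transform becomes regular and the cascade terminates. Symmetrically $q_{7}$ should yield a single infinitely near point $q_{8}$, after which the map is regular. Finally I would apply the identical procedure to $\psi^{-1}$ and confirm it needs no further blowups, so that the eight points of \eqref{eq-base-pt-laguerre} constitute the complete indeterminacy data. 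As a global consistency check, blowing up $\mathbb{P}^{1}\times\mathbb{P}^{1}$ at $2+4+2=8$ points is exactly the number required to obtain a Sakai surface of type $A_{3}^{(1)}/D_{5}^{(1)}$ (Picard rank $2+8=10 = \mathrm{rank}(A_{3}^{(1)})+\mathrm{rank}(D_{5}^{(1)})$), which is the surface underlying the target equation; obtaining any other total would signal an error.

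I expect the main obstacle to be the bookkeeping of the cascade over $q_{3}$ rather than any conceptual difficulty. Because the recurrence is non-autonomous, the time variable $t$ (together with $n$ and $\alpha$) enters the coordinates of the infinitely near points in an increasingly intricate way --- note the quadratic-in-$t$ expression $t(1-2n+t-\alpha)$ locating $q_{6}$ --- and one must track the leading coefficients through the successive coordinate changes with no sign or factor errors, since it is precisely these values that pin down the surface and hence the identification with \eqref{eq:dPD5-KNY}.
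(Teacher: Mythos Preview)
Your proposal is correct and follows exactly the approach the paper indicates: the paper does not give a separate proof of this Lemma at all, simply stating that the list of base points summarizes the outcome of the blowup algorithm described in the preceding paragraph. Your plan --- locate the four visible indeterminacies of $\psi$ in the four charts of $\mathbb{P}^{1}\times\mathbb{P}^{1}$, run the blowup cascades in the prescribed $(u_i,v_i)$/$(U_i,V_i)$ charts until each strict transform is regular, and then check that $\psi^{-1}$ introduces nothing new --- is precisely that algorithm, and your leading-coefficient check (e.g.\ the $-(x-1)^{2}$ coefficient of $y^{5}$ in $N_y$) is the right mechanism for detecting the infinitely near points over $q_{3}$.
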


Resolving these base points lifts our birational mapping 
$\psi: \mathbb{P}^{1} \times \mathbb{P}^{1} \dashrightarrow \mathbb{P}^{1} \times \mathbb{P}^{1}$ to the isomorphism, 
also denoted by $\psi$, between the corresponding algebraic surfaces, 
$\psi: \mathcal{X}_{\mathbf{b}}\to \mathcal{X}_{\overline{\mathbf{b}}}$. The subscript $\mathbf{b}$  indicates that 
the coordinates of the base points (and hence the resulting surface) depend on the
parameters of the mapping, $\mathbf{b}= \{\alpha, t, n\}$. These parameters can (and do) change
under the mapping and so $\overline{\mathbf{b}}$ denotes the evolved set of parameters. Sometimes we drop the 
parameters subscript and use the notation $\overline{\mathcal{X}}$ for the range of the mapping.


\subsection{The Induced Mapping on $\operatorname{Pic}(\mathcal{X})$} 
\label{sub:the_induced_mapping_on_operatorname_pic_mathcal_x}

The next step in the identification procedure is to compute the induced mapping on the Picard lattice.
Recall that for a regular algebraic variety $\mathcal{X}$, its \emph{Picard group} (or \emph{Picard lattice}) is the quotient 
of the \emph{divisor group} $\operatorname{Div}(\mathcal{X})=\operatorname{Span}_{\mathbb{Z}}(D)$ that is a free Abelian
group generated by closed irreducible subvarieties $D$ of codimension $1$, by the subgroup $\operatorname{P}(\mathcal{X})$ of 
\emph{principal divisors} (i.e., by the relation of \emph{linear equivalence}),
\begin{equation*}
	\operatorname{Pic}(\mathcal{X}) \simeq \operatorname{Cl}(\mathcal{X}) = \operatorname{Div}(\mathcal{X})/\operatorname{P}(\mathcal{X}) 
	= \operatorname{Div}(\mathcal{X})/\sim,
\end{equation*}
see \cite{SmiKahKek:2000:IAG} or \cite{Sha:2013:BAG1}. In our case, it is enough to know that 
$\operatorname{Pic}(\mathbb{P}^{1} \times \mathbb{P}^{1}) = \operatorname{Span}_{\mathbb{Z}}\{\mathcal{H}_{x},\mathcal{H}_{y}\}$, where 
$\mathcal{H}_{x} = [H_{x=a}]$ is the class of a \emph{vertical}  and  $\mathcal{H}_{y} = [H_{y=b}]$ is the class of a \emph{horizontal} 
line on $\mathbb{P}^{1}\times \mathbb{P}^{1}$. Each blowup procedure at a point $q_{i}$ adds the class $\mathcal{F}_{i} = [F_{i}]$
of the \emph{exceptional divisor} (i.e., the \emph{central fiber}) of the blowup, so 
$\operatorname{Pic}(\mathcal{X}) = \operatorname{Span}_{\mathbb{Z}}\{\mathcal{H}_{x},\mathcal{H}_{y},\mathcal{F}_{1},\ldots, \mathcal{F}_{8}\}$.
Further, the Picard lattice is equipped with the symmetric bilinear \emph{intersection form} given by 
\begin{equation}\label{eq:int-form}
\mathcal{H}_{x}\bullet \mathcal{H}_{x} = \mathcal{H}_{y}\bullet \mathcal{H}_{y} = \mathcal{H}_{x}\bullet \mathcal{F}_{i} = 
\mathcal{H}_{y}\bullet \mathcal{F}_{j} = 0,\qquad \mathcal{H}_{x}\bullet \mathcal{H}_{y} = 1,\qquad  \mathcal{F}_{i}\bullet \mathcal{F}_{j} = - \delta_{ij}	 
\end{equation}
on the generators, and then extended by the linearity. 

The mapping $\psi$ induces a linear mapping 
$\psi_{*}: \operatorname{Pic}(\mathcal{X}) \to \operatorname{Pic}(\overline{\mathcal{X}})$.
Note that $\operatorname{Pic}(\mathcal{X})$ and $\operatorname{Pic}(\overline{\mathcal{X}})$ are canonically isomorphic, 
so we sometimes just use the notation $\operatorname{Pic}(\mathcal{X})$. We also use $\overline{F}_{i}$ to denote the divisor 
of the central fiber of the blowup at the point 
$\overline{q}_{i} = \psi(q_{i})$, and similarly for the backwards mapping and for the classes;
notation $\mathcal{F}_{i\cdots j}$ stands for $\mathcal{F}_{i}+ \cdots + \mathcal{F}_{j}$.

\begin{lemma}\label{lem:dyn}
	The action of the mapping $\psi_{*}: \operatorname{Pic}(\mathcal{X})\to \operatorname{Pic}(\overline{\mathcal{X}})$
	is given by
	\begin{alignat*}{2}
		\mathcal{H}_{x}& \mapsto 5 \overline{\mathcal{H}}_{x} + 2 \overline{\mathcal{H}}_{y} - \overline{\mathcal{F}}_{12} - 
		2 \overline{\mathcal{F}}_{3456} - \overline{\mathcal{F}}_{78}, \qquad&
		\mathcal{H}_{y}& \mapsto 2\overline{\mathcal{H}}_{x} + \overline{\mathcal{H}}_{y} - \overline{\mathcal{F}}_{3456},  \\
		\mathcal{F}_{1} &\mapsto 2\overline{\mathcal{H}}_{x} + \overline{\mathcal{H}}_{y} -\overline{\mathcal{F}}_{23456}, \qquad&
		\mathcal{F}_{5} &\mapsto \overline{\mathcal{H}}_{x}  - \overline{\mathcal{F}}_{4}\\
		\mathcal{F}_{2} &\mapsto 2\overline{\mathcal{H}}_{x} + \overline{\mathcal{H}}_{y} -\overline{\mathcal{F}}_{13456},\qquad&
		\mathcal{F}_{6} &\mapsto \overline{\mathcal{H}}_{x}  - \overline{\mathcal{F}}_{3}\\
		\mathcal{F}_{3} &\mapsto \overline{\mathcal{H}}_{x}  - \overline{\mathcal{F}}_{6},\qquad&
		\mathcal{F}_{7} &\mapsto 2\overline{\mathcal{H}}_{x} + \overline{\mathcal{H}}_{y} -\overline{\mathcal{F}}_{34568}, \\
		\mathcal{F}_{4} &\mapsto \overline{\mathcal{H}}_{x}  - \overline{\mathcal{F}}_{5}, \qquad&
		\mathcal{F}_{8} &\mapsto 2\overline{\mathcal{H}}_{x} + \overline{\mathcal{H}}_{y} -\overline{\mathcal{F}}_{34567}.
	\end{alignat*}
	The evolution of parameters (and hence, the base points) is given by 
	$\mathbf{b}=\{\alpha, t, n\}\mapsto \overline{\mathbf{b}}=\{\alpha,t,n+1\}$.
\end{lemma}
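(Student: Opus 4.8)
The plan is to compute the pushforward $\psi_*$ directly, generator by generator, using the explicit formula \eqref{eq:fwd} together with the base-point data of Lemma~\ref{lem:base-pt-laguerre}. Since $\psi:\mathcal{X}_{\mathbf{b}}\to\mathcal{X}_{\overline{\mathbf{b}}}$ is an isomorphism of surfaces, $\psi_*$ sends the class of a curve $C\subset\mathcal{X}_{\mathbf{b}}$ to the class of its image $\psi(C)\subset\mathcal{X}_{\overline{\mathbf{b}}}$, so the whole computation reduces to two tasks: finding the images of a generic vertical and a generic horizontal line (to obtain $\psi_*\mathcal{H}_x$ and $\psi_*\mathcal{H}_y$), and finding the images of the eight central fibers $F_1,\dots,F_8$ (to obtain $\psi_*\mathcal{F}_i$). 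Throughout, I would use the intersection form \eqref{eq:int-form} as a running consistency check, since $\psi_*$ must be a lattice isometry that preserves the anticanonical class $2\mathcal{H}_x+2\mathcal{H}_y-\sum_{i=1}^{8}\mathcal{F}_i$.

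For the line classes, I would take a generic vertical line $\{x=a\}$, substitute $x=a$ into \eqref{eq:fwd}, and read off the bidegree of the image curve by counting its intersections with the coordinate lines of the target: solving $\overline{x}(a,y)=c$ is quadratic in $y$ while solving $\overline{y}(a,y)=d$ is quintic in $y$, which gives the total transform $5\overline{\mathcal{H}}_x+2\overline{\mathcal{H}}_y$; the analogous computation for a horizontal line $\{y=b\}$ gives $2\overline{\mathcal{H}}_x+\overline{\mathcal{H}}_y$. Since a generic line avoids the $q_i$, its strict transform already has class $\mathcal{H}_x$ (resp.\ $\mathcal{H}_y$), so these numbers are exactly the $\overline{\mathcal{H}}$-parts of the claimed images. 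It then remains to determine the multiplicities $m_j$ with which each image curve passes through the base points $\overline{q}_j$ of the inverse map; subtracting $\sum_j m_j\overline{\mathcal{F}}_j$ produces the stated strict-transform classes.

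The exceptional classes are handled in the same spirit but require descending into the blowup charts. For each central fiber $F_i=\{u_i=0\}$, parametrized by the slope coordinate $v_i$, I would push its points forward through $\psi$ and identify the resulting curve and its class in $\mathcal{X}_{\overline{\mathbf{b}}}$; this is how one obtains, for instance, the \emph{reversals} $\mathcal{F}_3\mapsto\overline{\mathcal{H}}_x-\overline{\mathcal{F}}_6$ and $\mathcal{F}_6\mapsto\overline{\mathcal{H}}_x-\overline{\mathcal{F}}_3$ along the cascade, together with the $(2,1)$-curve images of $F_1,F_2,F_7,F_8$. Interleaved with this, I would compute the base points of $\psi^{-1}$; verifying that they coincide with the points of Lemma~\ref{lem:base-pt-laguerre} evaluated at $n+1$ both establishes the parameter evolution $\{\alpha,t,n\}\mapsto\{\alpha,t,n+1\}$ and pins down the target surface $\mathcal{X}_{\overline{\mathbf{b}}}$.

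The main obstacle is the bookkeeping along the infinitely-near cascades $q_3\leftarrow q_4\leftarrow q_5\leftarrow q_6$ and $q_7\leftarrow q_8$. Both the multiplicities $m_j$ entering $\psi_*\mathcal{H}_x$ and $\psi_*\mathcal{H}_y$ and the precise $\overline{\mathcal{F}}$-content of the $\psi_*\mathcal{F}_i$ depend on correctly tracking orders of vanishing and tangencies through the successive charts $(u_i,v_i)$, $(U_i,V_i)$, where the defining expressions grow steadily more involved (compare the coordinates of $q_5,q_6$ in Lemma~\ref{lem:base-pt-laguerre}). I expect this chart-by-chart analysis to be the bulk of the work; once the images are in hand, the closing verification that the resulting linear map preserves \eqref{eq:int-form} and the anticanonical class is a short and reassuring check.
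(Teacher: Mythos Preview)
Your proposal is correct and is exactly the ``standard direct computation'' the paper invokes: the paper omits the proof entirely, referring instead to \cite{DzhFilSto:2019:RCDOPHWDPE} and \cite{DzhTak:2018:OSAOSGTODPE} for worked examples of precisely the push-forward procedure you outline (track generic $\mathcal{H}_x$, $\mathcal{H}_y$ and each $F_i$ through the map and the blowup charts, read off bidegrees and multiplicities at the $\overline{q}_j$, and verify isometry of the intersection form and invariance of $-\mathcal{K}_{\mathcal{X}}$). There is nothing to add or correct; your identification of the cascade bookkeeping at $q_3\leftarrow\cdots\leftarrow q_6$ and $q_7\leftarrow q_8$ as the labor-intensive part is accurate.
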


\begin{proof} The proof of this Lemma is a standard direct computation and is omitted, see 
	\cite{DzhFilSto:2019:RCDOPHWDPE} or \cite{DzhTak:2018:OSAOSGTODPE} for similar examples worked out in detail.
\end{proof}

\subsection{The Surface Type} 
\label{sub:the_surface_type}
Given that our mapping is completely regularized by eight blowups, we know that it should fit into the discrete Painlev\'e equations framework.
To determine the type of the resulting algebraic surface, we need to find the configuration of the irreducible components of (the 
proper transform of) a bi-degree $(2,2)$ (or \emph{bi-quadratic}) curve $\Gamma$ on which these points lie. Since the proper transform of 
$\Gamma$ for a generic choice of parameters is the unique \emph{anti-canonical divisor} (i.e., the polar divisor of a symplectic form
$\omega$), we denote it by $-K_{\mathcal{X}}$. We also denote by $\eta$ the projection mapping back to $\mathbb{P}^{1} \times \mathbb{P}^{1}$,
\begin{equation*}
	\eta: \mathcal{X}_{\mathbf{b}} = \operatorname{Bl}_{q_{1}\cdots q_{8}}(\mathbb{P}^{1} \times \mathbb{P}^{1}) \to \mathbb{P}^{1} \times \mathbb{P}^{1}.
\end{equation*}

\begin{lemma}
Base points $q_{1},\ldots,q_{8}$ of the mapping \eqref{eq:fwd} lie on the bi-quadratic curve $\Gamma$ 
given in the affine chart by the equation $x=0$ (the homogeneous equation of $\Gamma$ is $x^{0}x^{1}y^{1}y^{1} = 0$, 
where $x = x^{0}/x^{1}$ and $y=y^{0}/y^{1}$, so $\Gamma$ is indeed bi-quadratic); note that some points come in
infinitely-close degeneration cascades. The irreducible components $d_{i}$
of the \emph{proper transform} $-K_{\mathcal{X}}$ of $\Gamma$,
\begin{equation*}
	-K_{\mathcal{X}} = 2 H_{x} + 2 H_{y} - F_{1} -\cdots - F_{8} = d_{0} + d_{1} + 2 d_{2} + 2 d_{3} + d_{4} + d_{5},
\end{equation*}
are given by 
\begin{equation}
	d_{0} = H_{x} - F_{1}-F_{2},\  d_{1} = H_{x}-F_{7}-F_{8},\  d_{2}=H_{y}-F_{3}-f_{4},\ 
	d_{3}= F_{4}-F_{5},\  d_{4}= F_{3} - F_{4},\  d_{5} = F_{5} - F_{6}; 
\end{equation}
they define the \emph{surface root basis} $\delta_{1},\ldots, \delta_{5}$ of $-2$-classes in $\operatorname{Pic}(\mathcal{X})$
whose configuration is described by the Dynkin diagram of type $D_{5}^{(1)}$:
\begin{figure}[H]
\begin{equation}\label{eq:d-roots-lw}
	\raisebox{-32.1pt}{\begin{tikzpicture}[
			elt/.style={circle,draw=black!100,thick, inner sep=0pt,minimum size=2mm}]
		\path 	(-1,1) 	node 	(d0) [elt, label={[xshift=-10pt, yshift = -10 pt] $\delta_{0}$} ] {}
		        (-1,-1) node 	(d1) [elt, label={[xshift=-10pt, yshift = -10 pt] $\delta_{1}$} ] {}
		        ( 0,0) 	node  	(d2) [elt, label={[xshift=-10pt, yshift = -10 pt] $\delta_{2}$} ] {}
		        ( 1,0) 	node  	(d3) [elt, label={[xshift=10pt, yshift = -10 pt] $\delta_{3}$} ] {}
		        ( 2,1) 	node  	(d4) [elt, label={[xshift=10pt, yshift = -10 pt] $\delta_{4}$} ] {}
		        ( 2,-1) node 	(d5) [elt, label={[xshift=10pt, yshift = -10 pt] $\delta_{5}$} ] {};
		\draw [black,line width=1pt ] (d0) -- (d2) -- (d1)  (d2) -- (d3) (d4) -- (d3) -- (d5);
	\end{tikzpicture}} \qquad
			\begin{alignedat}{2}
			\delta_{0} &= \mathcal{H}_{x} - \mathcal{F}_{1} - \mathcal{F}_{2}, &\qquad  \delta_{3} &= \mathcal{F}_{4} - \mathcal{F}_{5},\\
			\delta_{1} &= \mathcal{H}_{x} - \mathcal{F}_{7} - \mathcal{F}_{8}, &\qquad  \delta_{4} &= \mathcal{F}_{3} - \mathcal{F}_{4},\\
			\delta_{2} &= \mathcal{H}_{y} - \mathcal{F}_{3} - \mathcal{F}_{4}, &\qquad  \delta_{5} &= \mathcal{F}_{5} - \mathcal{F}_{6}.
			\end{alignedat}
\end{equation}
	\caption{The Surface Root Basis for the Laguerre Weight Recurrence}
	\label{fig:d-roots-lw}
\end{figure}
\end{lemma}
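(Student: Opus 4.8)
The plan is to combine a geometric identification of the six irreducible curves with a routine intersection-theoretic verification, relying throughout on the base-point data already recorded in Lemma~\ref{lem:base-pt-laguerre}. First I would fix the anticanonical class: since $K_{\mathbb{P}^{1}\times\mathbb{P}^{1}} = -2\mathcal{H}_{x} - 2\mathcal{H}_{y}$ and each blowup at $q_{i}$ raises the canonical class by $\mathcal{F}_{i}$, adjunction gives $-K_{\mathcal{X}} = 2\mathcal{H}_{x} + 2\mathcal{H}_{y} - \mathcal{F}_{1} - \cdots - \mathcal{F}_{8}$. The bi-quadratic $\Gamma$ with homogeneous equation $x^{0}x^{1}(y^{1})^{2}=0$ splits as the two reduced lines $\{x=0\}$ and $\{x=\infty\}$ together with the double line $2\{y=\infty\}$, so $[\Gamma] = 2\mathcal{H}_{x} + 2\mathcal{H}_{y}$ as required; I would then read off from Lemma~\ref{lem:base-pt-laguerre} that every base point lies on $\Gamma$, with $q_{1},q_{2}$ on $\{x=0\}$, the cascade $q_{7}\leftarrow q_{8}$ on $\{x=\infty\}$, and $q_{3}\leftarrow q_{4}$ on $\{y=\infty\}$, while the remaining points $q_{5},q_{6}$ sit on exceptional fibers spawned inside the $q_{3}$-cascade.

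Next I would identify the six components as proper transforms. The proper transform of $\{x=0\}$ loses $\mathcal{F}_{1},\mathcal{F}_{2}$ and becomes $d_{0}=\mathcal{H}_{x}-\mathcal{F}_{1}-\mathcal{F}_{2}$; that of $\{x=\infty\}$ loses $\mathcal{F}_{7},\mathcal{F}_{8}$ and becomes $d_{1}=\mathcal{H}_{x}-\mathcal{F}_{7}-\mathcal{F}_{8}$; and that of $\{y=\infty\}$ loses $\mathcal{F}_{3},\mathcal{F}_{4}$ and becomes $d_{2}=\mathcal{H}_{y}-\mathcal{F}_{3}-\mathcal{F}_{4}$. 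The three remaining components come from the exceptional curves of the $q_{3}$-cascade: since $q_{4}\in F_{3}$, $q_{5}\in F_{4}$ and $q_{6}\in F_{5}$, their proper transforms are $d_{4}=\mathcal{F}_{3}-\mathcal{F}_{4}$, $d_{3}=\mathcal{F}_{4}-\mathcal{F}_{5}$ and $d_{5}=\mathcal{F}_{5}-\mathcal{F}_{6}$. The delicate point is bookkeeping for the double line: working in the $(u_{4},v_{4})$ chart one checks that $q_{5}$ lies on $F_{4}$ but off the proper transform of $\{y=\infty\}$ (its $v_{4}$-coordinate is $1/t\neq 0$), and this is exactly what forces the configuration to split into the separate classes $d_{2}$ and $d_{3}$ rather than forming one longer chain.

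With the components fixed, the multiplicities are pinned down purely by linear algebra. Writing $-K_{\mathcal{X}}=\sum_{i} m_{i}d_{i}$ and matching the coefficients of $\mathcal{H}_{x},\mathcal{H}_{y}$ and of $\mathcal{F}_{1},\ldots,\mathcal{F}_{8}$ yields a triangular system whose unique nonnegative solution is $(m_{0},\ldots,m_{5})=(1,1,2,2,1,1)$, reproducing $-K_{\mathcal{X}}=d_{0}+d_{1}+2d_{2}+2d_{3}+d_{4}+d_{5}$. Finally I would confirm the surface type directly from the intersection form~\eqref{eq:int-form}: each $d_{i}$ satisfies $d_{i}\bullet d_{i}=-2$, and the only nonzero off-diagonal pairings are $d_{0}\bullet d_{2}=d_{1}\bullet d_{2}=d_{2}\bullet d_{3}=d_{3}\bullet d_{4}=d_{3}\bullet d_{5}=1$, with all other pairs orthogonal; this is precisely the incidence of the affine $D_{5}^{(1)}$ diagram of Figure~\ref{fig:d-roots-lw}, with $\delta_{2}$ and $\delta_{3}$ as the two trivalent nodes.

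The main obstacle will be the second step: the chart-by-chart bookkeeping in the infinitely-near cascade over $q_{3}$. Because $\{y=\infty\}$ enters $\Gamma$ with multiplicity two, one must pass successively to the blowup charts $(u_{3},v_{3})$, $(u_{4},v_{4})$, $(u_{5},v_{5})$ and decide at each stage which exceptional fiber the next base point lies on and whether it also lies on the proper transform of the original line, since it is this distinction that distributes the multiplicities among $d_{2},d_{3},d_{4},d_{5}$. Everything else -- the class of $-K_{\mathcal{X}}$, the multiplicity computation, and the $(-2)$-curve and Dynkin-incidence checks -- is a mechanical application of the intersection form and could equally be quoted from the analogous computations in \cite{DzhFilSto:2019:RCDOPHWDPE} and \cite{DzhTak:2018:OSAOSGTODPE}.
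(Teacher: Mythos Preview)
Your proposal is correct and follows the same direct-computation approach the paper takes; in fact you spell out considerably more than the paper does, which simply states the lemma and illustrates the blowup process via Figure~\ref{fig:surface-laguerre} without a formal proof. Your chart computation locating $q_{5}$ at $(u_{4},v_{4})=(0,1/t)$, hence on $F_{4}$ but off the proper transform $\{v_{4}=0\}$ of the line at infinity, is exactly the verification that fixes the split between $d_{2}$ and $d_{3}$, and your multiplicity and intersection checks are routine and accurate. One small wording quibble: in the $D_{5}^{(1)}$ diagram only $\delta_{2}$ and $\delta_{3}$ are trivalent, so calling them ``the two trivalent nodes'' is fine, but be careful not to suggest the diagram has more---otherwise the argument is complete.
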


\begin{figure}[ht]
	\begin{tikzpicture}[>=stealth,basept/.style={circle, draw=red!100, fill=red!100, thick, inner sep=0pt,minimum size=1.2mm}]
		\begin{scope}[xshift = 0cm]
			\draw [black, line width = 1pt]  	(4,0) 	-- (-0.5,0) 	node [left] {$H_{y}$} node[pos=0, right] {$y=0$};
			\draw [black, line width = 1pt] 	(4,2.5) -- (-0.5,2.5)	node [left] {$H_{y}$} node[pos=0, right] {$y=\infty$};
			\draw [black, line width = 1pt] 	(0,3) -- (0,-0.5)		node [below] {$H_{x}$} node[pos=0, above, xshift=-7pt] {$x=0$};
			\draw [black, line width = 1pt] 	(3,3) -- (3,-0.5)		node [below] {$H_{x}$} node[pos=0, above, xshift=7pt] {$x=\infty$};
			
			\node (q1) at (0,0.5) 	[basept,label={[left] $q_{1}$}] {};
			\node (q2) at (0,1.3) 	[basept,label={[left] $q_{2}$}] {};
			\node (q3) at (0.6,2.5) [basept,label={[below left] $q_{3}$}] {};
			\node (q4) at (0.8,3.0) [basept,label={[above ] $q_{4}$}] {};
			\node (q5) at (1.5,3.0) 	[basept,label={[above ] $q_{5}$}] {};
			\node (q6) at (2.2,3.0) 	[basept,label={[above ] $q_{6}$}] {};
			\node (q7) at (3,0)		[basept,label={[above right] $q_{7}$}] {};
			\node (q8) at (3.5,-0.5) [basept,label={[below right] $q_{8}$}] {};
			\draw [line width = 0.8pt, ->] (q6) edge (q5) (q5) edge (q4) (q4) edge (q3);
			\draw [line width = 0.8pt, ->] (q8) edge (q7);
		\end{scope}

		\draw [->] (7,1.5)--(5,1.5) node[pos=0.5, below] {$\operatorname{Bl}_{q_{1}q_{2}q_{3}q_{7}}$};

		\begin{scope}[xshift = 9cm]
			\draw [red, line width = 1pt] 	(-0.5,2.5) -- (4,2.5)	node [right] {$H_{y} - F_{3}$};
			\draw [blue, line width = 1pt] 	(0,3) -- (0,-0.5)		node [below] {$H_{x}-F_{12}$};
			\draw [red, line width = 1pt ] (-0.5,0) .. controls (2,0) and (2.5,-0.5) .. (3,-1) node [below] {$H_{y} - F_{7}$};
			\draw [red, line width = 1pt] 	(3.5,3)  .. controls (3.5,1.5) and (3.7,1) .. (4,0.5)  node [right] {$H_{x} - F_{7}$};

			\draw [red, line width = 1 pt] (-0.4,0.7) -- (0.4,0.3) node [pos = 0, left] {$F_{1}$};
			\draw [red, line width = 1 pt] (-0.4,1.5) -- (0.4,1.1) node [pos = 0, left] {$F_{2}$};
			\draw [red, line width = 1 pt] (0.4,2.9) -- (1,1.7)  node [below right] {$F_{3}$};
			\draw [red, line width = 1 pt] (2,-1) -- (4,1) node [above right] {$F_{7}$};

			\node (q4) at (0.6,2.5) [basept,label={[below left] $q_{4}$}] {};
			\node (q5) at (0.8,3.0) [basept,label={[above ] $q_{5}$}] {};
			\node (q6) at (1.5,3.0) [basept,label={[above ] $q_{6}$}] {};
			\node (q8) at (3.825,0.83) [basept,label={[xshift=-8pt,yshift=-6pt] $q_{8}$}] {};
			
			\draw [line width = 0.8pt, ->] (q6) edge (q5) (q5) edge (q4);
			
		\end{scope}

		\draw [->] (10.5,-2.7)--(10.5,-1.2) node[pos=0.5, right] {$\operatorname{Bl}_{q_{4}q_{8}}$};

		\begin{scope}[xshift = 9cm, yshift = -6.5cm]
			\draw [blue, line width = 1pt] 	(-0.5,2.5) -- (4,2.5)	node [right] {$H_{y} - F_{34}$};
			\draw [blue, line width = 1pt] 	(0,3) -- (0,-0.5)		node [below] {$H_{x}-F_{12}$};
			\draw [red, line width = 1pt ] (-0.5,0) .. controls (2,0) and (2.5,-0.5) .. (3,-1) node [below] {$H_{y} - F_{7}$};
			\draw [blue, line width = 1pt] 	(3.5,3)  .. controls (3.5,1.5) and (3.7,1) .. (4,0.5)  node [right] {$H_{x} - F_{78}$};

			\draw [red, line width = 1 pt] (-0.4,0.7) -- (0.4,0.3) node [pos = 0, left] {$F_{1}$};
			\draw [red, line width = 1 pt] (-0.4,1.5) -- (0.4,1.1) node [pos = 0, left] {$F_{2}$};
			\draw [red, line width = 1 pt] (0.4,2.9) -- (1.6,0.5)  node [pos = 0, above right] {$F_{4}$};
			\draw [blue, line width = 1 pt] (0.7,1.3) .. controls (2,1.3) and (2.5,1.1) .. (2.5,0.5)  node [below] {$F_{3} - F_{4}$};
			\draw [red, line width = 1 pt] (4.0,-0.5) .. controls (3.5,0.1) and (3,0.7) .. (4.5,1.3) node [right] {$F_{8}$};
			\draw [blue, line width = 1pt] 	(2,-1)  .. controls (2.5,-0.7) and (3,-0.3) .. (4,0)  node [right] {$F_{7} - F_{8}$};

			\node (q5) at (0.9,1.9) [basept,label={[above right] $q_{5}$}] {};
			\node (q6) at (1.5,1.9) [basept,label={[above right] $q_{6}$}] {};
			
			\draw [line width = 0.8pt, ->] (q6) edge (q5);
			
		\end{scope}

		\draw [->] (5.5,-5.5)--(7.5,-5.5) node[pos=0.5, below] {$\operatorname{Bl}_{q_{7}q_{8}}$};

		\begin{scope}[xshift = 0cm, yshift = -6.5cm]
			\draw [blue, line width = 1pt] 	(-0.5,2.5) -- (4,2.5)	node [right] {$H_{y} - F_{34}$};
			\draw [blue, line width = 1pt] 	(0,3) -- (0,-0.5)		node [below] {$H_{x}-F_{12}$};
			\draw [red, line width = 1pt ] (-0.5,0) .. controls (2,0) and (2.5,-0.5) .. (3,-1) node [below] {$H_{y} - F_{7}$};
			\draw [blue, line width = 1pt] 	(3.5,3)  .. controls (3.5,1.5) and (3.7,1) .. (4,0.5)  node [right] {$H_{x} - F_{78}$};

			\draw [red, line width = 1 pt] (-0.4,0.7) -- (0.4,0.3) node [pos = 0, left] {$F_{1}$};
			\draw [red, line width = 1 pt] (-0.4,1.5) -- (0.4,1.1) node [pos = 0, left] {$F_{2}$};
			\draw [blue, line width = 1 pt] (0.4,2.9) -- (1.6,0.5)  node [pos = 0, above right] {$F_{4}-F_{5}$};
			\draw [blue, line width = 1 pt] (0.3,1.8) -- (2.8,1.8)  node [above] {$F_{5}-F_{6}$};
			\draw [red, line width = 1 pt] (2.0,2) -- (2.8,1.2)  node [below right] {$F_{6}$};
			\draw [blue, line width = 1 pt] (0.7,1.3) .. controls (2,1.3) and (2.5,1.1) .. (2.5,0.5)  node [below] {$F_{3} - F_{4}$};
			\draw [red, line width = 1 pt] (4.0,-0.5) .. controls (3.5,0.1) and (3,0.7) .. (4.5,1.3) node [right] {$F_{8}$};
			\draw [blue, line width = 1pt] 	(2,-1)  .. controls (2.5,-0.7) and (3,-0.3) .. (4,0)  node [right] {$F_{7} - F_{8}$};
		\end{scope}

		\draw [->] (1.5,-2.7)--(1.5,-1.2) node[pos=0.5, right] {$\operatorname{Bl}_{q_{1}\cdots q_{8}}$};

	\end{tikzpicture}
	\caption{The Sakai Surface for the Laguerre Weight Recurrence ($F_{i\cdots j} = F_{i} + \cdots + F_{j}$).}
	\label{fig:surface-laguerre}
\end{figure}

We show some intermediate stages of the blowup process and the resulting $D_{5}^{(1)}$ surface on Figure~\ref{fig:surface-laguerre}. 
Thus our recurrence belongs to the d-$\dPain{A_{3}^{(1)}/D_{5}^{(1)}}$ family with the symmetry group 
$\widetilde{W}\left(A_{3}^{(1)}\right)$. We describe the choice of the standard d-$\dPain{A_{3}^{(1)}/D_{5}^{(1)}}$
point configuration, choices of the root bases for the surface and the symmetry sub-lattices, and other data, in the Appendix; 
we follow \cite{KajNouYam:2017:GAOPE} in our conventions.

\begin{remark} Looking at Figure~\ref{fig:surface-laguerre}, we notice another $-2$-curve $F_{7}-F_{8}$ that is disjoint from the 
	irreducible components of the anti-canonical divisor. Such curves form the class $\Delta^{\text{nod}}$, see 
	\cite[Section 3.3]{Sak:2001:RSAWARSGPE}
	
\end{remark}

\subsection{Initial Geometry Identification} 
\label{sub:initial_geometry_identification}
\label{sub:subsection_name}
The next step in the identification process is to find some change of basis in $\operatorname{Pix}(\mathcal{X})$ from the basis 
$\{\mathcal{H}_{x},\mathcal{H}_{y},\mathcal{F}_{i}\}$ to the basis $\{\mathcal{H}_{q},\mathcal{H}_{p},\mathcal{E}_{j}\}$ that 
correspond to the standard geometry configuration that identifies the surface root bases; we refer to this step as \emph{matching the geometry}.
At this point there are many possible choices of such basis change, we later may have to adjust it to \emph{match the dynamics}.

\begin{lemma}\label{lem:change-basis-pre}
	The following change of basis of $\operatorname{Pic}(\mathcal{X})$ identifies the root bases 
	between the standard $D_{5}^{(1)}$ surface and the surface that we obtained for the Laguerre weight recurrence:
	\begin{align*}
		\mathcal{H}_{x} & = \mathcal{H}_{p}, & \qquad 
		\mathcal{H}_{q} & = 2\mathcal{H}_{x} + \mathcal{H}_{y} - \mathcal{F}_{1} - \mathcal{F}_{3} - \mathcal{F}_{4} - \mathcal{F}_{7},\\
		\mathcal{H}_{y} & = \mathcal{H}_{q} + 2\mathcal{H}_{p}  - \mathcal{E}_{1} -\mathcal{E}_{3} - \mathcal{E}_{5} - \mathcal{E}_{6}, & 
		\qquad 	\mathcal{H}_{p} &= \mathcal{H}_{x},\\
		\mathcal{F}_{1} & = \mathcal{H}_{p} - \mathcal{E}_{1}, & \qquad \mathcal{E}_{1} & = \mathcal{H}_{x} - \mathcal{F}_{1},\\
		\mathcal{F}_{2} & = \mathcal{E}_{2}, & \qquad \mathcal{E}_{2} & = \mathcal{F}_{2},\\
		\mathcal{F}_{3} & = \mathcal{H}_{p} - \mathcal{E}_{6}, & \qquad \mathcal{E}_{3} & = \mathcal{H}_{x} - \mathcal{F}_{7},\\
		\mathcal{F}_{4} & = \mathcal{H}_{p} - \mathcal{E}_{5}, & \qquad \mathcal{E}_{4} & = \mathcal{F}_{8},\\
		\mathcal{F}_{5} & = \mathcal{E}_{7}, & \qquad \mathcal{E}_{5} & = \mathcal{H}_{x} - \mathcal{F}_{4},\\
		\mathcal{F}_{6} & = \mathcal{E}_{8}, & \qquad \mathcal{E}_{6} & = \mathcal{H}_{x} - \mathcal{F}_{3},\\
		\mathcal{F}_{7} & = \mathcal{H}_{p} - \mathcal{E}_{3}, & \qquad \mathcal{E}_{7} & = \mathcal{F}_{5},\\
		\mathcal{F}_{8} & = \mathcal{E}_{4}, & \qquad \mathcal{E}_{8} & = \mathcal{F}_{6}.
		\end{align*}
\end{lemma}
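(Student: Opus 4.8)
The plan is to treat this as the identification of two different geometric realizations of one and the same abstract rank-$10$ Picard lattice, equipped with the intersection form \eqref{eq:int-form} and its distinguished anti-canonical class. On one side sits the realization coming from the Laguerre weight recurrence, with generators $\{\mathcal{H}_{x},\mathcal{H}_{y},\mathcal{F}_{1},\ldots,\mathcal{F}_{8}\}$ and the $D_{5}^{(1)}$ surface root basis $\delta_{0},\ldots,\delta_{5}$ recorded in \eqref{eq:d-roots-lw}. On the other side is the standard $D_{5}^{(1)}$ surface of \cite{KajNouYam:2017:GAOPE}, with generators $\{\mathcal{H}_{q},\mathcal{H}_{p},\mathcal{E}_{1},\ldots,\mathcal{E}_{8}\}$ and its standard surface root basis $\delta_{i}^{\mathrm{std}}$ recalled in the Appendix. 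What must be produced is a $\mathbb{Z}$-linear isomorphism between the two lattices that (i) is an isometry for the intersection form, (ii) sends $-K_{\mathcal{X}}$ to $-K_{\mathcal{X}}$, and (iii) identifies the two surface root bases node-for-node along the $D_{5}^{(1)}$ Dynkin diagram.

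First I would write down the standard roots $\delta_{i}^{\mathrm{std}}$ in terms of $\{\mathcal{H}_{q},\mathcal{H}_{p},\mathcal{E}_{j}\}$ and impose the matching $\delta_{i}\mapsto\delta_{i}^{\mathrm{std}}$ dictated by the labelling of Figure~\ref{fig:d-roots-lw}. These six vector identities, together with the choice $\mathcal{H}_{x}=\mathcal{H}_{p}$ identifying one of the two rulings and the requirement that the $(-1)$-classes $\mathcal{F}_{i}$ go to effective $(-1)$-classes, produce a determined linear system whose solution is exactly the change of basis in the statement. Substituting the proposed formulas, one checks directly that $\delta_{0}=\mathcal{E}_{1}-\mathcal{E}_{2}$, $\delta_{1}=\mathcal{E}_{3}-\mathcal{E}_{4}$, $\delta_{2}=\mathcal{H}_{q}-\mathcal{E}_{1}-\mathcal{E}_{3}$, $\delta_{3}=\mathcal{H}_{p}-\mathcal{E}_{5}-\mathcal{E}_{7}$, $\delta_{4}=\mathcal{E}_{5}-\mathcal{E}_{6}$, and $\delta_{5}=\mathcal{E}_{7}-\mathcal{E}_{8}$, which is precisely the standard $D_{5}^{(1)}$ root basis, so (iii) holds by construction.

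The remaining verifications are routine and I would carry them out by hand. For (i) it suffices to check the intersection numbers on generators, e.g. $\mathcal{H}_{q}\bullet\mathcal{H}_{p}=1$, $\mathcal{H}_{q}\bullet\mathcal{H}_{q}=\mathcal{H}_{p}\bullet\mathcal{H}_{p}=0$, and $\mathcal{E}_{i}\bullet\mathcal{E}_{j}=-\delta_{ij}$, using only the rules \eqref{eq:int-form} in the $\{\mathcal{H}_{x},\mathcal{H}_{y},\mathcal{F}\}$-basis; for instance $\mathcal{H}_{q}\bullet\mathcal{H}_{q}=(2\mathcal{H}_{x}+\mathcal{H}_{y}-\mathcal{F}_{1}-\mathcal{F}_{3}-\mathcal{F}_{4}-\mathcal{F}_{7})^{2}=4(\mathcal{H}_{x}\bullet\mathcal{H}_{y})-4=0$. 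For (ii) I would substitute the expressions for $\mathcal{H}_{q},\mathcal{H}_{p},\mathcal{E}_{j}$ into $2\mathcal{H}_{q}+2\mathcal{H}_{p}-\mathcal{E}_{1}-\cdots-\mathcal{E}_{8}$ and confirm that it collapses to $2\mathcal{H}_{x}+2\mathcal{H}_{y}-\mathcal{F}_{1}-\cdots-\mathcal{F}_{8}=-K_{\mathcal{X}}$. That the two columns of formulas are genuinely mutually inverse is confirmed by back-substitution, e.g. $\mathcal{F}_{1}=\mathcal{H}_{p}-\mathcal{E}_{1}=\mathcal{H}_{x}-(\mathcal{H}_{x}-\mathcal{F}_{1})=\mathcal{F}_{1}$, so the map is an isomorphism over $\mathbb{Z}$.

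The main obstacle is not any single computation but the non-uniqueness of the identification. Matching the $D_{5}^{(1)}$ diagram only constrains the map on the rank-$6$ surface root sublattice, and the diagram itself admits nontrivial automorphisms, so the isometry is pinned down only up to the action of the symmetry group $\widetilde{W}\left(A_{3}^{(1)}\right)$ together with Dynkin diagram automorphisms. Any choice satisfying (i)--(iii) correctly identifies the surfaces, but a generic such choice will fail to conjugate the dynamical map $\psi_{*}$ of Lemma~\ref{lem:dyn} into the standard translation of the d-$\dPain{A_{3}^{(1)}/D_{5}^{(1)}}$ equation. The real content of the statement is therefore the selection of one particular admissible matching --- the one recorded here --- which, as the surrounding text notes, is provisional and may still have to be composed with an element of the symmetry group in the subsequent dynamics-matching step.
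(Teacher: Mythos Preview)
Your proposal is correct and follows exactly the same approach as the paper, which states only that ``this is a direct computation based on comparing the surface root bases on Figure~\ref{fig:d-roots-lw} and Figure~\ref{fig:d-roots-d51}.'' You have simply spelled out the details of that computation --- the node-by-node matching of the $\delta_{i}$, the isometry and anti-canonical checks, and the mutual-inverse verification --- and your closing remark about non-uniqueness up to $\widetilde{W}\left(A_{3}^{(1)}\right)$ anticipates precisely the adjustment the paper carries out in Section~\ref{sub:final_geometry_identification}.
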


\begin{proof}
	This is a direct computation based on comparing the surface root bases on Figure~\ref{fig:d-roots-lw} and  Figure~\ref{fig:d-roots-d51}. 
\end{proof}


\subsection{The Symmetry Roots and the Translations} 
\label{sub:the_symmetry_roots_and_the_translations}
We are now in a position to compare the dynamics. Note that there are two \emph{non-equivalent} model examples of discrete Painlev\'e
equations, that we label as $\mathbf{[\overline{1}1 \overline{1}1]}$ and $\mathbf{[\overline{1}001]}$,
on the $D_{5}^{(1)}$-surface that are described in Section~\ref{sub:some_standard_discrete_d_dpain_a__3_1_d__5_1_equations} in the 
Appendix. It is interesting that the mapping \eqref{eq:xyn-evol} has the \emph{multiplicative-additive} from that looks
very similar to the mapping \eqref{eq:dPD5-Sakai}, but instead it is equivalent to the mapping~\eqref{eq:dPD5-KNY} that has the purely
\emph{additive} form. To show that, we start with the standard choice of the symmetry root basis \eqref{eq:a-roots-a3}			
and use the change of basis in Lemma~\ref{lem:change-basis-pre} to get the symmetry roots for the applied problem shown on 
Figure~\ref{fig:a-roots-lw-pre}.	
\begin{figure}[ht]
\begin{equation}\label{eq:a-roots-lw-pre}			
	\raisebox{-32.1pt}{\begin{tikzpicture}[
			elt/.style={circle,draw=black!100,thick, inner sep=0pt,minimum size=2mm}]
		\path 	(-1,1) 	node 	(a0) [elt, label={[xshift=-10pt, yshift = -10 pt] $\alpha_{0}$} ] {}
		        (-1,-1) node 	(a1) [elt, label={[xshift=-10pt, yshift = -10 pt] $\alpha_{1}$} ] {}
		        ( 1,-1) node  	(a2) [elt, label={[xshift=10pt, yshift = -10 pt] $\alpha_{2}$} ] {}
		        ( 1,1) 	node 	(a3) [elt, label={[xshift=10pt, yshift = -10 pt] $\alpha_{3}$} ] {};
		\draw [black,line width=1pt ] (a0) -- (a1) -- (a2) --  (a3) -- (a0); 
	\end{tikzpicture}} \qquad
			\begin{alignedat}{2}
			\alpha_{0} &= \mathcal{F}_{1} - \mathcal{F}_{2}, &\qquad  \alpha_{2} &= \mathcal{F}_{7} - \mathcal{F}_{8},\\
			\alpha_{1} &= \mathcal{H}_{y} - \mathcal{F}_{1} - \mathcal{F}_{7}, &\qquad  \alpha_{3} &= 2\mathcal{H}_{x} + \mathcal{H}_{y} - 
			\mathcal{F}_{1} - \mathcal{F}_{3} - \mathcal{F}_{4} - \mathcal{F}_{5} - \mathcal{F}_{6} - \mathcal{F}_{7}.
			\\[5pt]
			\delta & = \mathrlap{\alpha_{0} + \alpha_{1} +  \alpha_{2} + \alpha_{3}.} 
			\end{alignedat}
\end{equation}
	\caption{The Symmetry Root Basis for the Laguerre Weight Recurrence (preliminary choice)}
	\label{fig:a-roots-lw-pre}	
\end{figure}
From the action of $\psi_{*}$ on $\operatorname{Pic}(\mathcal{X})$ given in Lemma~\ref{lem:dyn} we 
can now obtain the corresponding translation on the root lattice, decompose it in terms of the generators 
of the extended affine Weyl symmetry group, and compare the results with the standard mappings $\varphi$ and $\phi$ 
given in Section~\ref{sub:some_standard_discrete_d_dpain_a__3_1_d__5_1_equations}. We get
\begin{alignat*}{2}
	\psi_{*}: \upalpha &=  \langle \alpha_{0}, \alpha_{1}, \alpha_{2}, \alpha_{3}  \rangle
	\mapsto \psi_{*}(\upalpha) = \upalpha + \langle 0,-1,0,1 \rangle \delta,&\qquad 
	\psi &= \sigma_{3}\sigma_{2}w_{1}w_{2}w_{0}w_{1},\\
	\varphi_{*}: \upalpha &=  \langle \alpha_{0}, \alpha_{1}, \alpha_{2}, \alpha_{3} \rangle
	\mapsto \varphi_{*}(\upalpha) = \upalpha + \langle -1,1,-1,1 \rangle \delta,&\qquad
	\varphi &= \sigma_{3}\sigma_{2}w_{3}w_{1}w_{2}w_{0},\\
	\phi_{*}: \upalpha &=  \langle \alpha_{0}, \alpha_{1}, \alpha_{2}, \alpha_{3} \rangle
	\mapsto \varphi_{*}(\upalpha) = \upalpha + \langle -1,0,0,1 \rangle \delta, &\qquad 
	\phi &= \sigma_{3}\sigma_{1}w_{2}w_{1}w_{0}.
\end{alignat*}

From here
we immediately see that $\psi = w_{1}\circ \varphi \circ w_{1}^{-1}$ (note that $w_{1} \sigma_{3}\sigma_{2}  = \sigma_{3}\sigma_{2} w_{3}$ 
and that $w_{1}$ is an involution, $w_{1}^{-1} = w_{1}$). Thus, our dynamic is equivalent to the standard equation~\eqref{eq:dPD5-KNY} written in 
\cite{KajNouYam:2017:GAOPE} but is different from equation~\eqref{eq:dPD5-Sakai} written in \cite{Sak:2001:RSAWARSGPE}
(i.e., $\mathbf{[0\overline{1}01]} = \mathbf{[\overline{1}1 \overline{1}1]}$). To find the change of variables
matching the two equations we first need to adjust our change of basis in $\operatorname{Pic}(\mathcal{X})$ by acting on it by $w_{1}$,
so that we match not only the geometry, but also the \emph{dynamics}. We do it in the next section.

\subsection{Final Geometry Identification} 
\label{sub:final_geometry_identification}

\begin{figure}[ht]
\begin{equation}\label{eq:a-roots-lw-fin}			
	\raisebox{-32.1pt}{\begin{tikzpicture}[
			elt/.style={circle,draw=black!100,thick, inner sep=0pt,minimum size=2mm}]
		\path 	(-1,1) 	node 	(a0) [elt, label={[xshift=-10pt, yshift = -10 pt] $\alpha_{0}$} ] {}
		        (-1,-1) node 	(a1) [elt, label={[xshift=-10pt, yshift = -10 pt] $\alpha_{1}$} ] {}
		        ( 1,-1) node  	(a2) [elt, label={[xshift=10pt, yshift = -10 pt] $\alpha_{2}$} ] {}
		        ( 1,1) 	node 	(a3) [elt, label={[xshift=10pt, yshift = -10 pt] $\alpha_{3}$} ] {};
		\draw [black,line width=1pt ] (a0) -- (a1) -- (a2) --  (a3) -- (a0); 
	\end{tikzpicture}} \qquad
			\begin{alignedat}{2}
			\alpha_{0} &= \mathcal{H}_{y} - \mathcal{F}_{2} - \mathcal{F}_{7}, &\qquad  
			\alpha_{2} &= \mathcal{H}_{y} -  \mathcal{F}_{1} - \mathcal{F}_{8},\\
			\alpha_{1} &= \mathcal{F}_{1} + \mathcal{F}_{7} - \mathcal{H}_{y}, &\qquad  
			\alpha_{3} &= 2\mathcal{H}_{x} + \mathcal{H}_{y} - 
			\mathcal{F}_{1} - \mathcal{F}_{3} - \mathcal{F}_{4} - \mathcal{F}_{5} - \mathcal{F}_{6} - \mathcal{F}_{7}.
			\\[5pt]
			\delta & = \mathrlap{\alpha_{0} + \alpha_{1} +  \alpha_{2} + \alpha_{3}.} 
			\end{alignedat}
\end{equation}
	\caption{The Symmetry Root Basis for the Laguerre Weight Recurrence (final choice)}
	\label{fig:a-roots-lw-fin}	
\end{figure}

\begin{lemma}\label{lem:change-basis-fin}
	After the change of basis of $\operatorname{Pic}(\mathcal{X})$ given by
	\begin{align*}
		\mathcal{H}_{x} & = \mathcal{H}_{q} + \mathcal{H}_{p} - \mathcal{E}_{5} - \mathcal{E}_{6}, & \qquad 
		\mathcal{H}_{q} & = 2\mathcal{H}_{x} + \mathcal{H}_{y} - \mathcal{F}_{1} - \mathcal{F}_{3} - \mathcal{F}_{4} - \mathcal{F}_{7},\\
		\mathcal{H}_{y} & = \mathcal{H}_{q} + 2\mathcal{H}_{p}  - \mathcal{E}_{1} -\mathcal{E}_{3} - \mathcal{E}_{5} - \mathcal{E}_{6}, & 
		\qquad 	\mathcal{H}_{p} &= \mathcal{H}_{x} + \mathcal{H}_{y} - \mathcal{F}_{1} - \mathcal{F}_{7},\\
		\mathcal{F}_{1} & = \mathcal{H}_{q} + \mathcal{H}_{p} - \mathcal{E}_{1} - \mathcal{E}_{5} - \mathcal{E}_{6}, & \qquad 
		\mathcal{E}_{1} & = \mathcal{H}_{x} - \mathcal{F}_{1},\\
		\mathcal{F}_{2} & = \mathcal{E}_{2}, & \qquad \mathcal{E}_{2} & = \mathcal{F}_{2},\\
		\mathcal{F}_{3} & = \mathcal{H}_{p} - \mathcal{E}_{6}, & \qquad \mathcal{E}_{3} & = \mathcal{H}_{x} - \mathcal{F}_{7},\\
		\mathcal{F}_{4} & = \mathcal{H}_{p} - \mathcal{E}_{5}, & \qquad \mathcal{E}_{4} & = \mathcal{F}_{8},\\
		\mathcal{F}_{5} & = \mathcal{E}_{7}, & \qquad 
		\mathcal{E}_{5} & = \mathcal{H}_{x} + \mathcal{H}_{y} - \mathcal{F}_{1} - \mathcal{F}_{4} - \mathcal{F}_{7},\\
		\mathcal{F}_{6} & = \mathcal{E}_{8}, & \qquad 
		\mathcal{E}_{6} & =  \mathcal{H}_{x} + \mathcal{H}_{y} - \mathcal{F}_{1} - \mathcal{F}_{3} - \mathcal{F}_{7},\\
		\mathcal{F}_{7} & = \mathcal{H}_{q} + \mathcal{H}_{p} - \mathcal{E}_{3} - \mathcal{E}_{5} - \mathcal{E}_{6}, & \qquad 
		\mathcal{E}_{7} & = \mathcal{F}_{5},\\
		\mathcal{F}_{8} & = \mathcal{E}_{4}, & \qquad \mathcal{E}_{8} & = \mathcal{F}_{6}.
		\end{align*}
		the recurrence relations \eqref{eq:xyn-evol}
		for variables $x_{n}$ and $y_{n}$ coincides with the discrete Painlev\'e equation given by
		\eqref{eq:dPD5-KNY}. The resulting identification of the symmetry root bases (the surface root bases do not change) is shown in 
		Figure~\ref{fig:a-roots-lw-fin}.
\end{lemma}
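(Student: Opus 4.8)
The plan is to obtain the final identification by correcting the preliminary one of Lemma~\ref{lem:change-basis-pre} through the reflection $w_{1}$, exactly as anticipated at the end of Section~\ref{sub:the_symmetry_roots_and_the_translations}, where the relation $\psi = w_{1}\circ \varphi \circ w_{1}^{-1}$ was established. The preliminary change of basis matches the \emph{surface} root data, but it conjugates the standard translation $\varphi_{*}$ into $w_{1}\varphi_{*}w_{1}^{-1}$; precomposing the preliminary Picard isometry with $w_{1}$ undoes this conjugation and matches the \emph{dynamics} as well. Since $w_{1}$ is the simple reflection $s_{\alpha_{1}}$ in the preliminary symmetry root $\alpha_{1} = \mathcal{H}_{y} - \mathcal{F}_{1} - \mathcal{F}_{7}$ (Figure~\ref{fig:a-roots-lw-pre}), a $(-2)$-class, the correction is computed from the single formula $s_{\alpha_{1}}(\mathcal{C}) = \mathcal{C} + (\mathcal{C}\bullet \alpha_{1})\,\alpha_{1}$ together with the intersection form \eqref{eq:int-form}.

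First I would apply $s_{\alpha_{1}}$ to each of the preliminary expressions for $\mathcal{H}_{q},\mathcal{H}_{p},\mathcal{E}_{1},\dots,\mathcal{E}_{8}$ written in the basis $\{\mathcal{H}_{x},\mathcal{H}_{y},\mathcal{F}_{i}\}$, and dually to the expressions for $\mathcal{H}_{x},\mathcal{H}_{y},\mathcal{F}_{i}$ written in $\{\mathcal{H}_{q},\mathcal{H}_{p},\mathcal{E}_{j}\}$. This is a finite linear computation on the rank-$10$ lattice: precisely those classes whose intersection with $\alpha_{1}$ is nonzero get shifted by a multiple of $\alpha_{1}$. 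For instance $\mathcal{H}_{x}\bullet\alpha_{1} = 1$ sends $\mathcal{H}_{p} = \mathcal{H}_{x}$ to $\mathcal{H}_{x}+\alpha_{1} = \mathcal{H}_{x}+\mathcal{H}_{y}-\mathcal{F}_{1}-\mathcal{F}_{7}$, while $\mathcal{H}_{q}\bullet\alpha_{1}=0$ leaves $\mathcal{H}_{q}$ fixed; collecting all images reproduces verbatim the formulas stated in the lemma. I would then run the two consistency checks that certify the adjustment: recomputing the symmetry roots $\alpha_{0},\dots,\alpha_{3}$ in the new basis yields the configuration of Figure~\ref{fig:a-roots-lw-fin} (the surface roots $\delta_{0},\dots,\delta_{5}$ of \eqref{eq:d-roots-lw} are unchanged, since $\alpha_{1}$ is orthogonal to the surface root lattice), and pulling $\varphi_{*}$ back through the corrected isometry now reproduces $\psi_{*}$ of Lemma~\ref{lem:dyn} identically rather than up to conjugation.

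With the Picard-lattice identification matching both geometry and dynamics, Sakai's theory guarantees that the two dynamical systems coincide, and it remains to exhibit the concrete birational change of variables realizing this. To produce \eqref{eq:qp2xy} and \eqref{eq:xy2qp} I would realize the standard coordinates as ratios of sections of the pencils attached to $\mathcal{H}_{q}$ and $\mathcal{H}_{p}$ on our surface. Concretely, $q$ is the rational function whose level curves form the bidegree-$(2,1)$ linear system $|2\mathcal{H}_{x}+\mathcal{H}_{y}-\mathcal{F}_{1}-\mathcal{F}_{3}-\mathcal{F}_{4}-\mathcal{F}_{7}|$ through $q_{1},q_{3},q_{4},q_{7}$, and $p$ is attached to the bidegree-$(1,1)$ pencil $|\mathcal{H}_{x}+\mathcal{H}_{y}-\mathcal{F}_{1}-\mathcal{F}_{7}|$ through $q_{1},q_{7}$. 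Writing two independent members of each pencil and normalizing against a few distinguished fibers pins down the coordinate expressions, consistent with the parameter identification $a_{0}=n+\alpha,\ a_{1}=-n,\ a_{2}=n,\ a_{3}=1-n-\alpha$; inverting gives \eqref{eq:xy2qp}. The final confirmation is the direct substitution of \eqref{eq:qp2xy} into \eqref{eq:xyn-evol}, equivalently of \eqref{eq:xy2qp} into \eqref{eq:dPD5-KNY}, checking that the two recurrences transform into one another.

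I expect the substantive difficulty to lie in this geometry-to-coordinates step rather than in the lattice bookkeeping. The delicate part is handling the infinitely-close base points correctly when constructing the pencils: the cascade $q_{3}\leftarrow q_{4}\leftarrow q_{5}\leftarrow q_{6}$ and the pair $q_{7}\leftarrow q_{8}$ carry $t$-, $n$-, and $\alpha$-dependent positions (see Lemma~\ref{lem:base-pt-laguerre}), so the requirement that a member of $|\mathcal{H}_{q}|$ pass through $q_{3}$ and its first infinitely-near point $q_{4}$, but not through $q_{5},q_{6}$, imposes tangency conditions that must be translated into the blowup charts with care. Getting these conditions right is precisely what builds the correct $t$-dependence into \eqref{eq:qp2xy}, and hence what makes the final substitution collapse \eqref{eq:xyn-evol} onto the additive normal form \eqref{eq:dPD5-KNY}.
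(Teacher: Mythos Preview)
Your proposal is correct and follows exactly the approach the paper signals at the end of Section~\ref{sub:the_symmetry_roots_and_the_translations}: the final change of basis is the preliminary one of Lemma~\ref{lem:change-basis-pre} composed with the reflection $w_{1}$ in $\alpha_{1} = \mathcal{H}_{y} - \mathcal{F}_{1} - \mathcal{F}_{7}$, and the verification is the straight lattice computation you outline. Your third paragraph (realizing the isometry by explicit pencils and coordinates) goes beyond what the paper attaches to this lemma---that material is precisely the paper's proof of Theorem~\ref{thm:coordinate-change} in Section~\ref{sub:the_change_of_coordinates}, while the present lemma is left at the Picard-lattice level.
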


Next we need to realize this change of basis on $\operatorname{Pic}(\mathcal{X})$ by an explicit change of coordinates. For that, it is convenient to first 
match the parameters between the applied problem and the reference example. This is done with the help of the \emph{Period Map}.

\subsection{The Period Map and the Identification of Parameters} 
\label{sub:the_period_map_and_the_identification_of_parameters}

For the root variable parameterization, let us consider a generic point configuration corresponding to the geometry of Figure~\ref{fig:surface-laguerre}.
Using the action of the $\mathbf{PGL}_{2}(\mathbb{C})\times \mathbf{PGL}_{2}(\mathbb{C})$ gauge group 
we can put $H_{x} - F_{1} - F_{2} = V(x)$, $H_{x} - F_{7} - F_{8} = V(X)$, $H_{y} - F_{3} - F_{4} = V(Y)$, and $q_{7}(\infty,0)$. This leaves the scale
freedom on the coordinates $x$ and $y$; we use the scaling in the $x$-coordinate to put $q_{3}(1,\infty)$. Then our point configuration can be described in 
terms of generic parameters $c_{i}$ as 
\begin{equation*}
	q_{1}(0,c_{1}),\ q_{2}(0,c_{2}),\ q_{3}(1,\infty)\leftarrow q_{4}(u_{3}=0,v_{3}=0)\leftarrow q_{5}(c_{5},0)\leftarrow q_{6}(c_{6},0),\ 
	q_{7}(\infty,0)\leftarrow q_{8}(U_{8} = 0,V_{8} = 0)	
\end{equation*}
with the remaining scaling gauge action in the $y$-coordinate given by
\begin{equation*}
		\left(\begin{matrix}
			c_{1} & c_{2}\\
			c_{5} & c_{6} 
		\end{matrix};  \begin{matrix}
			x \\ y
		\end{matrix}\right) \sim \left(\begin{matrix}
			\lambda c_{1} & \lambda c_{2}\\
			\lambda c_{5} & \lambda^{2} c_{6} 
		\end{matrix}; \begin{matrix}
			x\\ \lambda y
		\end{matrix}\right),\quad \lambda\neq0.
\end{equation*}

It is immediate that the points $q_{i}$ lie on the polar divisor of a symplectic form given in the affine 
$(x,y)$ chart by $\omega = k \frac{dx\wedge dy}{x}$. We then have the following Lemma.

\begin{lemma}
	\qquad
	
	\begin{enumerate}[(i)]
		\item The residues of the symplectic form $\omega = k \frac{dx\wedge dy}{x}$
		along the irreducible components of the polar divisor are given by
		\begin{alignat*}{3}
			\operatorname{res}_{d_{0}} \omega &=  k dy, &\qquad
			\operatorname{res}_{d_{2}} \omega &=  0,\quad &\qquad 
			\operatorname{res}_{d_{4}} \omega &=  -k \frac{dv_{3}}{v_{3}^{2}},\\
			\operatorname{res}_{d_{1}} \omega &=  -k dy,&\qquad 
			\operatorname{res}_{d_{3}} \omega &=  -3k\, dU_{4}, &\qquad
			\operatorname{res}_{d_{5}} \omega &=  k \frac{3\, dU_{5}}{c_{6}}.
		\end{alignat*}
		
		\item The root variables are given by 
		\begin{equation}\label{eq:root-vars-lw}
			a_{0} = - k c_{2}, \qquad a_{1} = k c_{1},\qquad 
			a_{2} = - k c_{1}, \qquad a_{3} = k\left(- c_{1} + c_{5} - \frac{c_{6}}{c_{5}}\right),
		\end{equation}
		and so the root variables are constrained by $a_{1} + a_{2} = 0$. 
		Without loss of generality we can put $k=-1$ and then use the $\lambda$ gauge scaling to ensure the standard normalization condition 
		$a_{0} + a_{1} + a_{2} + a_{3} =1$. Then we get 
		\begin{equation}\label{eq:par-match-app}
			c_{2} = a_{0},\qquad c_{1} = -a_{1} = a_{2}, \qquad c_{6} = c_{5}(1 - c_{1} - c_{2} + c_{5}),
		\end{equation}
		which shows that the application parameters are in fact generic for this point configuration; putting 
		$n = a_{2}$ and $\alpha = a_{0} - a_{2}$, as well as denoting $c_{6}$ by $t$, establishes this equivalence. Note that the 
		parameter evolution is now consistent between the root variables and the application parameters; $\overline{n} = n+1$, 
		$\overline{\alpha} = \alpha$, and $\overline{t} = t$.
		\end{enumerate}
\end{lemma}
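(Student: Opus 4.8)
The plan is to establish the two parts in turn: first compute the Poincar\'e residues of $\omega = k\,\frac{dx\wedge dy}{x}$ along every irreducible component of the polar divisor $-K_{\mathcal{X}} = d_0+d_1+2d_2+2d_3+d_4+d_5$, and then feed these residues into the period map to read off the root variables and match the parameters. Before computing anything, I would record the pole orders using the elementary blow-up rule: if $\omega\sim \frac{dx\wedge dy}{x^{a}y^{b}}$ near a point lying on polar curves of orders $a,b$, then after $x=u,\ y=uv$ one gets $\omega\sim \frac{du\wedge dv}{u^{a+b-1}v^{b}}$, so the new central fiber carries pole order $a+b-1$. Applying this along the two cascades shows that $\omega$ has \emph{simple} poles along $d_0,d_1,d_4,d_5$ and \emph{double} poles along $d_2,d_3$ (in agreement with the multiplicities in $-K_{\mathcal{X}}$), that the pole order climbs to $2$ along $F_4$ and then drops back to $1$ along $F_5$ and to $0$ along $F_6,F_1,F_2,F_7,F_8$. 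This structural check tells me in advance which components will have a genuine logarithmic residue and which will not.

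For part (i), along $d_0$ (the line $x=0$) the form is already in logarithmic shape, $\omega = k\,\frac{dx}{x}\wedge dy$, so $\operatorname{res}_{d_0}\omega = k\,dy$; passing to $X=1/x$ gives $\omega = -k\,\frac{dX}{X}\wedge dy$ and hence $\operatorname{res}_{d_1}\omega = -k\,dy$. The remaining components sit over the infinitely-near cascades $q_3\leftarrow q_4\leftarrow q_5\leftarrow q_6$ and $q_7\leftarrow q_8$, so for these I would substitute the blow-up coordinates recorded in Lemma~\ref{lem:base-pt-laguerre} (first $u_3=x-1$, $v_3=1/(y(x-1))$, then $U_4=y(x-1)^2$, $V_4=v_3$, and so on), pull $\omega$ back step by step, and isolate the coefficient of the logarithmic term along the relevant exceptional or proper-transform divisor. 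For the double-pole components one must separate the principal part carefully: along $d_2$ the pulled-back form is purely of type $dv_3\wedge(\cdots)/v_3^{2}$ with no $1/v_3$ term, which is exactly why $\operatorname{res}_{d_2}\omega = 0$, whereas along $d_3$ and $d_5$ expanding $(1+U_kV_k)^{-1}$ and reading off the simple-pole coefficient yields $1$-forms proportional to $dU_4$ and $dU_5$, matching the stated expressions.

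For part (ii), I would use that the root variables are the values of the period map $\chi$ on the final symmetry roots $\alpha_0,\dots,\alpha_3$ of Figure~\ref{fig:a-roots-lw-fin}. Writing each $\alpha_i$ in the $\{\mathcal{H},\mathcal{F}\}$-basis and pairing with $\omega$ through the residues from part (i), each period collapses to an integral of a residue $1$-form between the two points where the corresponding component meets its Dynkin-diagram neighbors; for the cascade roots these endpoints are the base-point positions $c_5$ (the $U_4$-coordinate of $q_5$) and $c_6$ (the $U_5$-coordinate of $q_6$), which is precisely how $c_5$ and $c_6/c_5$ enter $a_3$. This produces $a_0=-kc_2$, $a_1=kc_1$, $a_2=-kc_1$, $a_3=k(-c_1+c_5-c_6/c_5)$, whence $a_1+a_2=0$ is immediate. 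I would then fix the overall scale $k=-1$ (harmless, as $\omega$ is defined only up to a constant) and exploit the residual $\lambda$-scaling gauge to impose $a_0+a_1+a_2+a_3=1$; solving the resulting relations gives $c_2=a_0$, $c_1=-a_1=a_2$, and $c_6=c_5(1-c_1-c_2+c_5)$. Finally, matching the affine base points $q_1=(0,c_1)$, $q_2=(0,c_2)$ against their application values $q_1=(0,n)$, $q_2=(0,n+\alpha)$ forces $n=a_2$ and $\alpha=a_0-a_2$, and setting $c_6=t$ fixes the last parameter; the evolution $n\mapsto n+1$ with $\alpha,t$ fixed from Lemma~\ref{lem:dyn} then agrees with the induced translation on the $a_i$.

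The main obstacle is part (i) for the length-four cascade, and in particular getting the exact numerical coefficients right: one must push $\omega$ all the way through $q_3\leftarrow q_4\leftarrow q_5\leftarrow q_6$ in consistent charts and, for the double-pole components, extract the logarithmic coefficient rather than the leading $1/v^{2}$ term. Producing the precise constants (for instance the factor $3$ appearing in $\operatorname{res}_{d_3}\omega$ and $\operatorname{res}_{d_5}\omega$) is delicate, since it emerges only from the accumulated Jacobian factors along the composed blow-ups. Once the residues are pinned down, the period-map pairing in part (ii) is comparatively mechanical; the only remaining care is to use the gauge freedom consistently so that the normalization $a_0+a_1+a_2+a_3=1$ and the identifications $n=a_2$, $\alpha=a_0-a_2$, $t=c_6$ all hold simultaneously, thereby exhibiting the application parameters as a generic (in fact codimension-one, because of $a_1+a_2=0$) slice of the point configuration.
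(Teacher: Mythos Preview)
The paper does not actually supply a proof of this Lemma; it is stated as a routine period-map computation, with the parallel Appendix Lemma (Lemma~\ref{lem-period_map-d4}) likewise left unproved and the reader referred to \cite{KajNouYam:2017:GAOPE,DzhTak:2018:OSAOSGTODPE,DzhFilSto:2019:RCDOPHWDPE} for worked examples. Your plan is precisely the standard procedure those references describe: pull $\omega$ back through the blow-up charts of Lemma~\ref{lem:base-pt-laguerre}, extract the residue $1$-forms component by component, and then evaluate $\chi(\alpha_i)$ by integrating these $1$-forms between the marked intersection points on adjacent Dynkin nodes; so there is nothing to compare against and your outline is correct.

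One small caution on terminology. For the multiplicity-two components $d_{2}$ and $d_{3}$ the form $\omega$ has a genuine order-two pole, so what the paper records as $\operatorname{res}_{d_{2}}\omega$ and $\operatorname{res}_{d_{3}}\omega$ is not the classical Poincar\'e residue (which is only defined for simple poles) but the specific local $1$-form that enters Sakai's period pairing. Your operational description---separate the principal part in the $(U_{k},V_{k})$ chart and read off the coefficient that contributes to $\chi$---is the right recipe, but when you write it up you should make explicit which convention you are using so that the ``$\operatorname{res}$'' notation is unambiguous. Otherwise the argument, including your bookkeeping of pole orders along the cascade and the identification of $c_{5},c_{6}$ as the $U_{4}$- and $U_{5}$-coordinates of $q_{5},q_{6}$, is sound.
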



\subsection{The Change of Coordinates} 
\label{sub:the_change_of_coordinates}
We are now ready to prove Theorem~\ref{thm:coordinate-change}. Note that at this point we have not shown that the parameter $t$
in \eqref{eq:dPD5-KNY} is the same as in \eqref{eq:xyn-evol}, so we continue working with generic parameters $c_{i}$   
from the previous section.

\begin{proof}(\textbf{Theorem 5}) The proof is standard, and so we only outline the key steps. From the linear change of basis on 
	$\operatorname{Pic}(\mathcal{X})$ given in Lemma~\ref{lem:change-basis-fin}, we see that $x$ is a projective coordinate
	on a pencil of $(1,1)$ curves in the $(q,p)$-plane passing through the points $p_{5}$ and $p_{6}$, and $y$ is a projective 
	coordinate on a pencil of $(1,2)$ curves in the $(q,p)$-plane passing through the points $p_{1}$, $p_{3}$, $p_{5}$, and $p_{6}$.
	The bases for these pencils are given by the curves with affine defining polynomials $\{q,qp - a_{1}\}$ and $\{1, p(q(p+t) - a_{1})\}$,
	i.e., 
	\begin{equation*}
		x = \frac{A q + B(qp - a_{1})}{C q + D(qp - a_{1})},\qquad y = \frac{K + Lp(q(p+t) - a_{1})}{M + Np(q(p+t) - a_{1})}.
	\end{equation*}
	Using the correspondence between the exceptional divisor classes for $F_{i}$, $i=1,2,3,4,7,8$ 
	allows us to fix the values of the coefficients $A,\ldots,N$ to get $x = \frac{q(p+t)+a_{2}}{qp + a_{2}}$ and $y = \frac{(p+t)(qp + a_{2})}{t}$.
	Moreover, the correspondence $F_{7} - F_{8} = H_{q} + H_{p} - E_{3} - E_{4} - E_{5} - E_{7}$ imposes the $a_{1} + a_{2} = 0$ constraint,
	and the condition that $F_{5} - F_{6} = F_{7} - F_{8}$ shows that $c_{5} = t$, as expected. The inverse change of variables is obtained 
	along the same lines.	
\end{proof}



\appendix

\section{Discrete Painlev\'e Equations in the \lowercase{d}-$\dPain{A_{3}^{(1)}/D_{5}^{(1)}}$
family} 
\label{sec:discrete_painlev_e_equations_in_the_lowercase_d_dpain_a__3_1_d__5_1_family}

To make this paper self-contained, we collect in this Appendix some of the basic facts about the geometry of the $D_{5}^{(1)}$-family
of Sakai surfaces and some standard discrete Painlev\'e equations associated with this surface family. The computations here
are standard (see \cite{KajNouYam:2017:GAOPE}, \cite{DzhTak:2018:OSAOSGTODPE}, \cite{DzhFilSto:2019:RCDOPHWDPE}) and are mostly omitted. 
We use $(q,p)$-coordinates for the standard example and
follow the standard reference \cite{KajNouYam:2017:GAOPE} for the choice of the standard point configuration and the root bases.

\subsection{The Point Configuration} 
\label{sub:the_point_configuration}
We start with the root basis of the surface sub-lattice that is given by the classes $\delta_{i}$ of the irreducible 
components of the anti-canonical divisor 
\begin{equation*}
	\delta = - \mathcal{K}_{\mathcal{X}} = 2 \mathcal{H}_{f} + 2 \mathcal{H}_{g} - \mathcal{E}_{1} 
	 - \mathcal{E}_{2} - \mathcal{E}_{3} - \mathcal{E}_{4} - \mathcal{E}_{5} - \mathcal{E}_{6} - \mathcal{E}_{7} - \mathcal{E}_{8}
	 = \delta_{0} + \delta_{1} + 2 \delta_{2} + 2 \delta_{3} + \delta_{4} + \delta_{5}.
\end{equation*}
The intersection configuration of those roots is given by the Dynkin diagram of type $D_{5}^{(1)}$, as shown on Figure~\ref{fig:d-roots-d51}.
\begin{figure}[ht]
\begin{equation}\label{eq:d-roots-d51}			
	\raisebox{-32.1pt}{\begin{tikzpicture}[
			elt/.style={circle,draw=black!100,thick, inner sep=0pt,minimum size=2mm}]
		\path 	(-1,1) 	node 	(d0) [elt, label={[xshift=-10pt, yshift = -10 pt] $\delta_{0}$} ] {}
		        (-1,-1) node 	(d1) [elt, label={[xshift=-10pt, yshift = -10 pt] $\delta_{1}$} ] {}
		        ( 0,0) 	node  	(d2) [elt, label={[xshift=-10pt, yshift = -10 pt] $\delta_{2}$} ] {}
		        ( 1,0) 	node  	(d3) [elt, label={[xshift=10pt, yshift = -10 pt] $\delta_{3}$} ] {}
		        ( 2,1) 	node  	(d4) [elt, label={[xshift=10pt, yshift = -10 pt] $\delta_{4}$} ] {}
		        ( 2,-1) node 	(d5) [elt, label={[xshift=10pt, yshift = -10 pt] $\delta_{5}$} ] {};
		\draw [black,line width=1pt ] (d0) -- (d2) -- (d1)  (d2) -- (d3) (d4) -- (d3) -- (d5);
	\end{tikzpicture}} \qquad
			\begin{alignedat}{2}
			\delta_{0} &= \mathcal{E}_{1} - \mathcal{E}_{2}, &\qquad  \delta_{3} &= \mathcal{H}_{p} - \mathcal{E}_{5} - \mathcal{E}_{7},\\
			\delta_{1} &= \mathcal{E}_{3} - \mathcal{E}_{4}, &\qquad  \delta_{4} &= \mathcal{E}_{5} - \mathcal{E}_{6},\\
			\delta_{2} &= \mathcal{H}_{q} - \mathcal{E}_{1} - \mathcal{E}_{3}, &\qquad  \delta_{5} &= \mathcal{E}_{7} - \mathcal{E}_{8}.
			\end{alignedat}
\end{equation}
	\caption{The Surface Root Basis for the standard d-$\dPain{D_{5}^{(1)}}$ point configuration}
	\label{fig:d-roots-d51}	
\end{figure}

Using the action of the $\mathbf{PGL}_{2}(\mathbb{C})\times \mathbf{PGL}_{2}(\mathbb{C})$ gauge group 
we can put divisors $d_{2}$ and $d_{3}$, with $\delta_{i} = [d_{i}]$, to be
\begin{equation*}
	d_{2} = V(Q) = \{q = \infty\},\qquad d_{3} = V(P) = \{p = \infty\}.
\end{equation*} 
This reduces the gauge group action to that of a four-parameter subgroup, $(q,p)\mapsto (\lambda q + \mu, \zeta p + \xi)$. 
The corresponding point configuration and the Sakai surface are shown on Figure~\ref{fig:surface-d5}.
\begin{figure}[ht]
	\begin{tikzpicture}[>=stealth,basept/.style={circle, draw=red!100, fill=red!100, thick, inner sep=0pt,minimum size=1.2mm}]
		\begin{scope}[xshift = -1cm]
			\draw [black, line width = 1pt] 	(3.6,2.5) 	-- (-0.5,2.5)	node [left] {$H_{p}$} node[pos=0, right] {$p=\infty$};
			\draw [black, line width = 1pt] 	(2.6,3) -- (2.6,-0.5)		node [below] {$H_{q}$} node[pos=0, above, xshift=7pt] {$q=\infty$};
		
			\node (p1) at (2.6,0.5) [basept,label={[below left] $p_{1}$}] {};
			\node (p2) at (3.6,0.5) [basept,label={[below right] $p_{2}$}] {};
			\node (p3) at (2.6,1.5) [basept,label={[below left] $p_{3}$}] {};
			\node (p4) at (3.6,1.5) [basept,label={[below right] $p_{4}$}] {};
			\node (p5) at (1.4,2.5) [basept,label={[above left] $p_{5}$}] {};
			\node (p6) at (1.4,1.5) [basept,label={[below left] $p_{6}$}] {};
			\node (p7) at (0.2,2.5) [basept,label={[above left] $p_{7}$}] {};
			\node (p8) at (0.2,1.5) [basept,label={[below left] $p_{8}$}] {};
			\draw [line width = 0.8pt, ->] (p2) -- (p1);
			\draw [line width = 0.8pt, ->] (p4) -- (p3);
			\draw [line width = 0.8pt, ->] (p6) -- (p5);
			\draw [line width = 0.8pt, ->] (p8) -- (p7);
		\end{scope}
	
		\draw [->] (6.5,1.5)--(4.5,1.5) node[pos=0.5, below] {$\operatorname{Bl}_{p_{1}\cdots p_{8}}$};
	
		\begin{scope}[xshift = 8.5cm]
			\draw [blue, line width = 1pt] 	(3.6,2.5) 	-- (-0.5,2.5)	node [left] {$H_{p}-E_{5} - E_{7}$} node[pos=0, right] {};
			\draw [blue, line width = 1pt] 	(2.6,3) -- (2.6,-0.5)			node [below] {$H_{q}-E_{1}-E_{3}$} node[pos=0, above, xshift=7pt] {};

			\draw [blue,line width = 1pt] (2.2,0.8) -- (4.4,0.8) node [right] {$E_{1} - E_{2}$};
			\draw [red,line width = 1pt] (3.3,1.3) -- (3.7,0.3) node [right] {$E_{2}$};
			\draw [blue,line width = 1pt] (2.2,1.8) -- (4.4,1.8) node [right] {$E_{3} - E_{4}$};
			\draw [red,line width = 1pt] (3.3,2.3) -- (3.7,1.3) node [right] {$E_{4}$};
			\draw [blue,line width = 1pt] (1.6,0.7) -- (1.6,2.9) node [pos=0,below] {$\phantom{E}E_{5} - E_{6}$};
			\draw [red,line width = 1pt] (0.9,1.1) -- (1.9,1.5) node [pos=0,left] {$E_{6}$};
			\draw [blue,line width = 1pt] (0.2,0.7) -- (0.2,2.9) node [pos=0,below] {$E_{7} - E_{8}\phantom{E}$};
			\draw [red,line width = 1pt] (-0.5,1.3) -- (0.5,1.7) node [pos=0,left] {$E_{8}$};
		\end{scope}
	\end{tikzpicture}
	\caption{The model Sakai Surface for the d-$P\left(A_{3}^{(1)}/D_{5}^{(1)}\right)$ example}
	\label{fig:surface-d5}
\end{figure}

This point configuration can be parameterized by eight parameters $b_{1},\ldots, b_{8}$ as follows:
\begin{alignat*}{4}
	&p_{1}(\infty,b_{1})&&\leftarrow p_{2}(\infty,b_{1};q(p-b_{1})=b_{2}),	
	&\quad &p_{5}(b_{5},\infty)	&&\leftarrow p_{6}(b_{5},\infty;(q-b_{5})p=b_{6}),\\
	&p_{3}(\infty,b_{3})&&\leftarrow p_{4}(\infty,b_{3};q(p-b_{3})=b_{4}),	
	&\quad &p_{7}(b_{7},\infty)	&&\leftarrow p_{8}(b_{7},\infty;(q-b_{7})p=b_{8}).
\end{alignat*}
The four-parameter gauge group above acts on these configurations via
	\begin{equation}\label{eq:gauge-d5}
			\left(\begin{matrix}
				b_{1} & b_{2} & b_{3} & b_{4}\\
				b_{5} & b_{6} & b_{7} & b_{8}
			\end{matrix};  \begin{matrix}
				q \\ p
			\end{matrix}\right) \sim \left(\begin{matrix}
				\zeta b_{1} + \xi & \lambda \zeta b_{2} & \zeta b_{3}  + \xi & \lambda \zeta b_{4} \\
				\lambda b_{5} + \mu & \lambda \zeta b_{6}  & \lambda b_{7} + \mu& \lambda \zeta b_{8}
			\end{matrix}; \begin{matrix}
				\lambda q + \mu\\ \zeta g +\xi
			\end{matrix}\right),\,\lambda,\zeta\neq0,
	\end{equation}
and so the true number of parameters is four. The correct gauge-invariant parameterization is given by 
the \emph{root variables} that we now describe.

\subsection{The Period Map and the Root Variables} 
\label{sub:the_period_map_and_the_root_variables}
To define the root variables we begin by choosing a root basis in the 
\emph{symmetry sub-lattice} $Q = \Pi(R^{\perp}) \triangleleft \operatorname{Pic}(\mathcal{X})$ 
and defining the symplectic form $\omega$
whose polar divisor $-K_{\mathcal{X}}$ is the configuration of $-2$-curves shown on Figure~\ref{fig:surface-d5}. 
For the symmetry root basis we take the same basis as in \cite{KajNouYam:2017:GAOPE}, see Figure~\ref{fig:a-roots-a3}.

\begin{figure}[ht]
\begin{equation}\label{eq:a-roots-a3}			
	\raisebox{-32.1pt}{\begin{tikzpicture}[
			elt/.style={circle,draw=black!100,thick, inner sep=0pt,minimum size=2mm}]
		\path 	(-1,1) 	node 	(a0) [elt, label={[xshift=-10pt, yshift = -10 pt] $\alpha_{0}$} ] {}
		        (-1,-1) node 	(a1) [elt, label={[xshift=-10pt, yshift = -10 pt] $\alpha_{1}$} ] {}
		        ( 1,-1) node  	(a2) [elt, label={[xshift=10pt, yshift = -10 pt] $\alpha_{2}$} ] {}
		        ( 1,1) 	node 	(a3) [elt, label={[xshift=10pt, yshift = -10 pt] $\alpha_{3}$} ] {};
		\draw [black,line width=1pt ] (a0) -- (a1) -- (a2) --  (a3) -- (a0); 
	\end{tikzpicture}} \qquad
			\begin{alignedat}{2}
			\alpha_{0} &= \mathcal{H}_{p} - \mathcal{E}_{1} - \mathcal{E}_{2}, &\qquad  \alpha_{2} &= \mathcal{H}_{p} - \mathcal{E}_{3} - \mathcal{E}_{4},\\
			\alpha_{1} &= \mathcal{H}_{q} - \mathcal{E}_{5} - \mathcal{E}_{6}, &\qquad  \alpha_{3} &= \mathcal{H}_{q} - \mathcal{E}_{7} - \mathcal{E}_{8}.
			\\[5pt]
			\delta & = \mathrlap{\alpha_{0} + \alpha_{1} +  \alpha_{2} + \alpha_{3}.} 
			\end{alignedat}
\end{equation}
	\caption{The Standard Root Basis for the d-$P\left(A_{3}^{(1)}\right)$ Symmetry Sub-lattice}
	\label{fig:a-roots-a3}	
\end{figure}

A symplectic form $\omega\in -\mathcal{K}_{\mathcal{X}}$ such that 
$[\omega] = \delta_{0} + \delta_{1} + 2 \delta_{2} + 2\delta_{3} + \delta_{4} + \delta_{5}$ can be given in local 
coordinate charts as
\begin{equation}\label{eq:symp-form}
	\omega = k dq\wedge dp = - k \frac{dQ\wedge dp}{Q^{2}} = - k \frac{dq\wedge dP}{P^{2}} = k \frac{dQ \wedge dP}{Q^{2} P^{2}} 
	= - k \frac{du_{i}\wedge dv_{i}}{u_{i}} = - k \frac{dU_{j}\wedge d V_{j}}{V_{j}},
\end{equation}
where, as usual, $Q = 1/q$, $P = 1/p$ are the coordinates centered at infinity, the blowup coordinates $u_{i}$, $v_{i}$ at the 
points $p_{i}$, $i=1,3$, are given by $Q = u_{i}$, $p = b_{i} + u_{i} v_{i}$, and the blowup coordinates
$(U_{j}, V_{j})$ at the points $p_{j}$, $j=5,7$,
are given by $q = b_{j} + U_{j}V_{j}$ and $P = V_{j}$;  $k$ is some non-zero proportionality constant that we normalize later. 
Then we have the following Lemma.

\begin{lemma}\label{lem-period_map-d4} 
	\qquad
	\begin{enumerate}[(i)]
		\item The residue of the symplectic form $\omega$ along the irreducible components of the polar divisor is given by
		\begin{equation}
			\operatorname{res}_{d_{0}} \omega = -k\, dv_{1},\   
			\operatorname{res}_{d_{1}} \omega = - k\, dv_{3},\   
			\operatorname{res}_{d_{2}} \omega = \operatorname{res}_{d_{3}} \omega = 0,\  
			\operatorname{res}_{d_{4}} \omega = k\, dU_{5},\  
			\operatorname{res}_{d_{5}} \omega =  k\, dU_{7}.
		\end{equation}
		\item The root variables $a_{i}$ are given by 
		\begin{equation}\label{eq:d4-root_vars}
			a_{0} = k b_{2},\quad a_{1} = - k b_{6},\quad a_{2} = k b_{4},\quad a_{3} = - k b_{8}.
		\end{equation}
		It is convenient to take $k=-1$. We can then use the gauge action \eqref{eq:gauge-d5} to normalize $b_{3} = b_{5} = 0$, 
		$b_{7} = 1$, and 
		$\chi(\delta) = a_{0} + a_{1} + a_{2} + a_{3} = 1$. In view of the relation of this example to differential Painlev\'e equations,
		it is also convenient to denote $b_{1}$ by $-t$. Then we get the following parameterization of this point configuration 
		in terms of root variables:
		\begin{equation}\label{eq:d4-root_var_par}
			b_{1} = -t,\quad b_{2} = -a_{0},\quad b_{3} = 0,\quad b_{4} = - a_{2},
			\quad b_{5} = 0,\quad b_{6} = a_{1},\quad b_{7} = 1,\quad b_{8} = a_{3}.
		\end{equation}  
		Note that if we use the notation 
		\begin{equation*}
			p_{12}\left(\frac{1}{\varepsilon},-t- \varepsilon a_{0}\right),\quad p_{34}\left(\frac{1}{\varepsilon},-\varepsilon a_{2}\right),\quad
			p_{56}\left(a_{1} \varepsilon, \frac{1}{\varepsilon}\right),\quad p_{78}\left(1 + a_{3} \varepsilon,\frac{1}{\varepsilon}\right),
		\end{equation*}
		and impose the normalization $a_{0} + a_{1} + a_{2} + a_{3} = 1$, 
		we get exactly the parameterization of the point configuration in section 8.2.18 of \cite{KajNouYam:2017:GAOPE}.
	\end{enumerate}
\end{lemma}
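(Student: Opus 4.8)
The plan is to treat the two parts of the Lemma separately: first I would compute the residues of $\omega$ by hand in each chart, and then feed them into Sakai's period map to read off the root variables and their normalization.

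For part (i), I would simply apply the Poincar\'e residue construction in each of the local charts recorded in \eqref{eq:symp-form}. Along $d_{2} = V(Q)$ and $d_{3} = V(P)$ the form reads $\omega = -k\,dQ\wedge dp/Q^{2}$ and $\omega = -k\,dq\wedge dP/P^{2}$, so the pole is of order two and the simple-pole residue vanishes, giving $\operatorname{res}_{d_{2}}\omega = \operatorname{res}_{d_{3}}\omega = 0$. Along each exceptional component I would use the corresponding blowup chart: on $d_{0} = E_{1}-E_{2}$ (coordinate $v_{1}$) and $d_{1} = E_{3}-E_{4}$ (coordinate $v_{3}$) the form is $\omega = -k\,du_{i}\wedge dv_{i}/u_{i}$, which has a simple pole along $u_{i}=0$, so writing $\omega = (du_{i}/u_{i})\wedge(-k\,dv_{i})$ I read off $\operatorname{res}_{d_{0}}\omega = -k\,dv_{1}$ and $\operatorname{res}_{d_{1}}\omega = -k\,dv_{3}$; symmetrically, on $d_{4} = E_{5}-E_{6}$ and $d_{5} = E_{7}-E_{8}$ the form $\omega = -k\,dU_{j}\wedge dV_{j}/V_{j} = (dV_{j}/V_{j})\wedge(k\,dU_{j})$ yields $\operatorname{res}_{d_{4}}\omega = k\,dU_{5}$ and $\operatorname{res}_{d_{5}}\omega = k\,dU_{7}$. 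This is entirely mechanical once the charts are fixed; the only care needed is the sign that comes from reordering the wedge.

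For part (ii), I would invoke the definition of the period map $\chi\colon Q\to\mathbb{C}$ on the symmetry sub-lattice. Since $\chi$ is a homomorphism it is determined by its values on the simple roots $\alpha_{0},\dots,\alpha_{3}$ of Figure~\ref{fig:a-roots-a3}, and these values are by definition the root variables $a_{i}$. Each simple root is attached to one of the four length-two cascades $p_{1}\leftarrow p_{2}$, $p_{3}\leftarrow p_{4}$, $p_{5}\leftarrow p_{6}$, $p_{7}\leftarrow p_{8}$, and $\chi(\alpha_{i})$ is computed by integrating the residue $1$-form from part (i) along the exceptional component of $-K_{\mathcal{X}}$ carrying that cascade. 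The key book-keeping step is to locate the infinitely-near point on its exceptional $\mathbb{P}^{1}$: in the chart $Q=u_{1}$, $p=b_{1}+u_{1}v_{1}$ one has $q(p-b_{1})=v_{1}$, so the condition $q(p-b_{1})=b_{2}$ defining $p_{2}$ places it at $v_{1}=b_{2}$, and integrating $\operatorname{res}_{d_{0}}\omega = -k\,dv_{1}$ against the normalized reference point on $d_{0}$ gives $\chi(\alpha_{0})=kb_{2}$. The same computation on the other three cascades produces $a_{2}=kb_{4}$, $a_{1}=-kb_{6}$, $a_{3}=-kb_{8}$, which is \eqref{eq:d4-root_vars}; summing gives $\chi(\delta)$ as a consistency check. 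I would then take $k=-1$ and use the four-parameter residual gauge \eqref{eq:gauge-d5} (the translations $\mu,\xi$ and the scalings $\lambda,\zeta$) to impose $b_{3}=b_{5}=0$ and $b_{7}=1$, rescale so that $a_{0}+a_{1}+a_{2}+a_{3}=1$, and rename $b_{1}=-t$; solving these conditions yields \eqref{eq:d4-root_var_par}, after which re-expressing the cascades in the $\varepsilon$-blown-up form $p_{12},p_{34},p_{56},p_{78}$ and comparing term by term with section~8.2.18 of \cite{KajNouYam:2017:GAOPE} finishes the identification.

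The step I expect to be the genuine obstacle is the period-map computation in part (ii): the residues of part (i) are $1$-forms on the exceptional curves, and turning them into the scalars $a_{i}$ requires fixing, for each cascade, the correct chain of integration (equivalently, the reference point on the $\mathbb{P}^{1}$ against which the position of the infinitely-near point is measured) and tracking the orientation so that all four signs emerge exactly as in \eqref{eq:d4-root_vars}. One also has to verify that the resulting $a_{i}$ are invariant under the residual gauge \eqref{eq:gauge-d5}, so that the normalization is legitimate; everything else is a routine, if careful, calculation.
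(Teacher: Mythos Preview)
Your proposal is correct and follows exactly the standard period-map computation that the paper has in mind; in fact the paper does not write out a proof of this Lemma at all, stating in the Appendix preamble that ``the computations here are standard \ldots\ and are mostly omitted.'' Your outline---Poincar\'e residues in the local charts of \eqref{eq:symp-form} for part~(i), then locating each infinitely-near point on its exceptional $\mathbb{P}^{1}$ and integrating the residue $1$-form for part~(ii)---is precisely that standard computation, and your caveat about orientations and reference points in the period integrals is the only place where genuine care is required.
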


\subsection{The Extended Affine Weyl Symmetry Group} 
\label{sub:the_extended_affine_weyl_symmetry_group}

We now describe the birational representation of the extended
affine Weyl symmetry group $\widetilde{W}\left(A_{3}^{(1)}\right) = \operatorname{Aut}\left(A_{3}^{(1)}\right) \ltimes W\left(A_{3}^{(1)}\right)$,
which is a \emph{semi-direct product} of the usual affine Weyl group $W\left(A_{3}^{(1)}\right)$ and the group of Dynkin diagram automorphisms
$\operatorname{Aut}\left(A_{3}^{(1)}\right)\simeq \mathbb{D}_{4}$.

The abstract affine Weyl group $W\left(A_{3}^{(1)}\right)$ is defined in terms of generators $w_{i} = w_{\alpha_{i}}$ and relations that 
are encoded by the affine Dynkin diagram $A_{3}^{(1)}$,
\begin{equation*}
	W\left(A_{3}^{(1)}\right) = W\left(\raisebox{-20pt}{\begin{tikzpicture}[
			elt/.style={circle,draw=black!100,thick, inner sep=0pt,minimum size=1.5mm}]
		\path 	(-0.5,0.5) 	node 	(a0) [elt, label={[xshift=-10pt, yshift = -10 pt] $\alpha_{0}$} ] {}
		        (-0.5,-0.5) node 	(a1) [elt, label={[xshift=-10pt, yshift = -10 pt] $\alpha_{1}$} ] {}
		        (0.5,-0.5) 	node 	(a2) [elt, label={[xshift=10pt, yshift = -10 pt] $\alpha_{2}$} ] {}
		        (0.5,0.5) 	node 	(a3) [elt, label={[xshift=10pt, yshift = -10 pt] $\alpha_{3}$} ] {};
		\draw [black,line width=1pt ] (a0) -- (a1) -- (a2)--(a3) -- (a0); 
	\end{tikzpicture}} \right)
	=
	\left\langle w_{0},\dots, w_{4}\ \left|\ 
	\begin{alignedat}{2}
    w_{i}^{2} = e,\quad  w_{i}\circ w_{j} &= w_{j}\circ w_{i}& &\text{ when 
   				\raisebox{-0.08in}{\begin{tikzpicture}[
   							elt/.style={circle,draw=black!100,thick, inner sep=0pt,minimum size=1.5mm}]
   						\path   ( 0,0) 	node  	(ai) [elt] {}
   						        ( 0.5,0) 	node  	(aj) [elt] {};
   						\draw [black] (ai)  (aj);
   							\node at ($(ai.south) + (0,-0.2)$) 	{$\alpha_{i}$};
   							\node at ($(aj.south) + (0,-0.2)$)  {$\alpha_{j}$};
   							\end{tikzpicture}}}\\
    w_{i}\circ w_{j}\circ w_{i} &= w_{j}\circ w_{i}\circ w_{j}& &\text{ when 
   				\raisebox{-0.17in}{\begin{tikzpicture}[
   							elt/.style={circle,draw=black!100,thick, inner sep=0pt,minimum size=1.5mm}]
   						\path   ( 0,0) 	node  	(ai) [elt] {}
   						        ( 0.5,0) 	node  	(aj) [elt] {};
   						\draw [black] (ai) -- (aj);
   							\node at ($(ai.south) + (0,-0.2)$) 	{$\alpha_{i}$};
   							\node at ($(aj.south) + (0,-0.2)$)  {$\alpha_{j}$};
   							\end{tikzpicture}}}
	\end{alignedat}\right.\right\rangle. 
\end{equation*} 
The natural action of this group on $\operatorname{Pic}(\mathcal{X})$ is given by reflections in the 
roots $\alpha_{i}$, 
\begin{equation}\label{eq:root-refl}
	w_{i}(\mathcal{C}) = w_{\alpha_{i}}(\mathcal{C}) = \mathcal{C} - 2 
	\frac{\mathcal{C}\bullet \alpha_{i}}{\alpha_{i}\bullet \alpha_{i}}\alpha_{i}
	= \mathcal{C} + \left(\mathcal{C}\bullet \alpha_{i}\right) \alpha_{i},\qquad \mathcal{C}\in \operatorname{Pic(\mathcal{X})},
\end{equation}
which can be extended to an action on point configurations by elementary birational maps (which lifts to 
isomorphisms $w_{i}: \mathcal{X}_{\mathbf{b}}\to \mathcal{X}_{\overline{\mathbf{b}}}$ on the family of Sakai's surfaces),
this is known as a birational representation of $W\left(A_{3}^{(1)}\right)$.

\begin{theorem}\label{thm:bir-weyl-d5}
	Reflections $w_{i}$ on $\operatorname{Pic}(\mathcal{X})$ are induced by the elementary 
	birational mappings given below, and also denoted by $w_{i}$, on the family $\mathcal{X}_{\mathbf{b}}$. To ensure the group structure, 
	we require that each mapping preserves our normalization, and so it is enough to describe the mappings in terms of the root variables
	(note that the parameter $t$ can also change when we consider the Dynkin diagram automorphisms, so it is convenient to include it 
	among the root variables):
	\begin{alignat}{2}
		w_{0}&: 
		\left(\begin{matrix} a_{0} & a_{1} \\ a_{2} & a_{3} \end{matrix}\ ;\ t\ ;
		\begin{matrix} q \\ p \end{matrix}\right) 
		&&\mapsto 
		\left(\begin{matrix} -a_{0} & a_{0} + a_{1} \\ a_{2} & a_{0} + a_{3} \end{matrix}\ ;\ t\ ;
		\begin{matrix} \displaystyle q + \frac{a_{0}}{p + t} \\ p \end{matrix}\right), \\
		w_{1}&: \left(\begin{matrix} a_{0} & a_{1} \\ a_{2} & a_{3} \end{matrix}\ ;\ t\ ;
		\begin{matrix} q \\ p \end{matrix}\right)
		&&\mapsto 
		\left(\begin{matrix} a_{0} + a_{1} & -a_{1} \\ a_{1} + a_{2} & a_{3} \end{matrix}\ ;\ t\ ;
		\begin{matrix}  q \\ \displaystyle p - \frac{a_{1}}{q} \end{matrix}\right), \\
		w_{2}&: 
		\left(\begin{matrix} a_{0} & a_{1} \\ a_{2} & a_{3} \end{matrix}\ ;\ t\ ;
		\begin{matrix} q \\ p \end{matrix}\right) 
		&&\mapsto 
		\left(\begin{matrix} a_{0} & a_{1} + a_{2} \\ -a_{2} & a_{2} + a_{3} \end{matrix}\ ;\ t\ ;
		\begin{matrix} \displaystyle q + \frac{a_{2}}{p}\\ p \end{matrix}\right), \\
		w_{3}&: 
		\left(\begin{matrix} a_{0} & a_{1} \\ a_{2} & a_{3} \end{matrix}\ ;\ t\ ;
		\begin{matrix} q \\ p \end{matrix}\right) 
		&&\mapsto 
		\left(\begin{matrix} a_{0}+a_{3} & a_{1} \\ a_{2}+a_{3} & -a_{3} \end{matrix}\ ;\ t\ ;
		\begin{matrix} q \\ \displaystyle  p - \frac{a_{3}}{q-1} \end{matrix}\right).
	\end{alignat}	
\end{theorem}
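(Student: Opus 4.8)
The plan is to establish each formula in two independent pieces, the action on the root variables $(a_0,a_1,a_2,a_3;t)$ and the action on the coordinates $(q,p)$, since these are controlled by different mechanisms. The root-variable piece is purely lattice-theoretic. By definition $w_i = w_{\alpha_i}$ is the reflection in the simple root $\alpha_i$, and the root variables are the values of the period map, $a_j = \chi(\alpha_j)$. Equivariance of the period map under $w_{\alpha_i}$ gives $\overline{a}_j = \chi\bigl(w_{\alpha_i}(\alpha_j)\bigr) = a_j + (\alpha_j\bullet\alpha_i)\,a_i$, so I only need the intersection numbers $\alpha_j\bullet\alpha_i$, which are read off the $A_3^{(1)}$ Dynkin diagram of Figure~\ref{fig:a-roots-a3}: they equal $-2$ when $j=i$, $1$ along an edge, and $0$ otherwise. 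This reproduces $a_i\mapsto-a_i$, the shift $a_j\mapsto a_j+a_i$ for each neighbour, and invariance of the non-adjacent variable, matching the stated transformations; the continuous parameter $t$ is a period of $\omega$ along a cycle orthogonal to every $\alpha_i$ and so is fixed by all four $w_i$.

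For the coordinate piece I would first compute, from \eqref{eq:root-refl}, the induced action of each $w_i$ on the generators $\mathcal{H}_q,\mathcal{H}_p,\mathcal{E}_1,\dots,\mathcal{E}_8$. The structural point is that $\alpha_0,\alpha_2$ have the form $\mathcal{H}_p-\mathcal{E}_a-\mathcal{E}_b$, so $\mathcal{H}_p\bullet\alpha_i=0$ and $w_i$ fixes $\mathcal{H}_p$; dually $\alpha_1,\alpha_3=\mathcal{H}_q-\mathcal{E}_a-\mathcal{E}_b$ force $w_i$ to fix $\mathcal{H}_q$. Hence $w_0,w_2$ must preserve the ruling $\{p=\mathrm{const}\}$ and so act by $p\mapsto p$, while $w_1,w_3$ preserve $\{q=\mathrm{const}\}$ and act by $q\mapsto q$; this already explains the shape of the four maps. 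Each reflection furthermore sends $\mathcal{E}_a\mapsto\mathcal{H}-\mathcal{E}_b$ for the fixed ruling class $\mathcal{H}\in\{\mathcal{H}_p,\mathcal{H}_q\}$, i.e.\ it is an elementary transformation of the relevant $\mathbb{P}^1$-bundle localized in the single fiber that carries the infinitely-near cascade $p_a\to p_b$, namely $\{p=-t\}$ for $w_0$, $\{p=0\}$ for $w_2$, $\{q=0\}$ for $w_1$, and $\{q=1\}$ for $w_3$, precisely where the claimed maps have their poles.

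To pin down the one free constant I would exploit this elementary-transformation structure directly; $w_2$ is representative. Write the candidate as $q\mapsto q+c/p$, $p\mapsto p$ with $c$ undetermined, and resolve its indeterminacy at $p_3(\infty,0)$ in the cascade chart $Q=u_3$, $p=u_3v_3$. The map becomes $(u_3,v_3)\mapsto\bigl(u_3v_3/(v_3+c),\,v_3+c\bigr)$, which contracts the proper transform of $\{p=0\}$, a $(-1)$-curve of class $\mathcal{H}_p-\mathcal{E}_3$, to the single point $\tilde v_3=c$ on the target exceptional over $\overline p_3$. Requiring this image to be the new center $\overline p_4$, located at $\tilde v_3=-\overline a_2=a_2$ by the parameterization \eqref{eq:d4-root_var_par}, forces $c=a_2$ and simultaneously reproduces the Picard relation $w_2(\mathcal{E}_3)=\mathcal{H}_p-\mathcal{E}_4$. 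The same computation at the fibers $\{p=-t\}$, $\{q=0\}$, $\{q=1\}$ produces the residues $a_0,a_1,a_3$ and the denominators $p+t,q,q-1$, and the inverse maps follow identically.

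Finally I would close the argument by confirming that each birational $w_i$ lifts to a genuine isomorphism $\mathcal{X}_{\mathbf{b}}\to\mathcal{X}_{\overline{\mathbf{b}}}$---that the eight blow-ups on both sides resolve its indeterminacy with no spurious base point surviving---and that the pushforward on $\operatorname{Pic}(\mathcal{X})$ agrees with \eqref{eq:root-refl} on every generator; by linearity this generator check is enough. The main obstacle is bookkeeping rather than conceptual: the centers come in infinitely-near cascades, so one must work consistently in both blow-up charts $(u_i,v_i)$ and $(U_j,V_j)$, keep track of which proper transform is contracted and which exceptional is created, and make sure the signs and residues along the cascade directions come out exactly as dictated by the period map. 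This is the routine-but-delicate part that the references \cite{KajNouYam:2017:GAOPE,DzhTak:2018:OSAOSGTODPE,DzhFilSto:2019:RCDOPHWDPE} carry out in full, and which justifies omitting the detailed calculation here.
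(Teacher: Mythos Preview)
Your proposal is correct and in fact goes well beyond what the paper itself supplies: the paper does not prove this theorem at all, merely stating that ``the computations here are standard (see \cite{KajNouYam:2017:GAOPE}, \cite{DzhTak:2018:OSAOSGTODPE}, \cite{DzhFilSto:2019:RCDOPHWDPE}) and are mostly omitted.'' Your outline---deriving the root-variable action from the Cartan matrix via equivariance of the period map, then fixing the birational coordinate action by observing which ruling class $\mathcal{H}_q$ or $\mathcal{H}_p$ each $\alpha_i$ is orthogonal to and matching the single free constant against the location of the infinitely-near point in the relevant fiber---is exactly the standard procedure those references carry out, and your worked $w_2$ example checks out in detail.
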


It is clear that the group of Dynkin diagram automorphisms $\operatorname{Aut}\left(A_{3}^{(1)}\right)\simeq \mathbb{D}_{4}$, 
so we only describe two generators $\sigma_{1}$, $\sigma_{2}$, as well as one more automorphism $\sigma_{3}$
that we need.

\begin{theorem}\label{thm:bir-aut-d4}
	Consider the automorphisms $\sigma_{1},\ldots ,\sigma_{3}$ of $\operatorname{Aut}\left(A_{3}^{(1)}\right)$ 
	that act on the symmetry and the surface root bases as follows (here we use the standard cycle 
	notations for permutations):
	\begin{equation}
		\sigma_{1} = (\alpha_{0}\alpha_{3})(\alpha_{1}\alpha_{2})=
		(\delta_{0}\delta_{5})(\delta_{1}\delta_{4})(\delta_{2}\delta_{3}),\qquad 
		\sigma_{2} = (\alpha_{0}\alpha_{2})=(\delta_{0}\delta_{1}), \qquad
		\sigma_{3} = (\alpha_{1}\alpha_{3})=(\delta_{4}\delta_{5}).
	\end{equation}
	Then $\sigma_{i}$ act on the Picard lattice as
	\begin{equation*}
			\sigma_{1} = (\mathcal{E}_{1}\mathcal{E}_{7})(\mathcal{E}_{2}\mathcal{E}_{8})(\mathcal{E}_{3}\mathcal{E}_{5})
			(\mathcal{E}_{4}\mathcal{E}_{6})w_{\rho},\qquad 
			\sigma_{2} = (\mathcal{E}_{1}\mathcal{E}_{3})(\mathcal{E}_{2}\mathcal{E}_{4}), \qquad
			\sigma_{3} = (\mathcal{E}_{5}\mathcal{E}_{7})(\mathcal{E}_{6}\mathcal{E}_{8}),			
	\end{equation*}
	where $w_{\rho}$ is a reflection  \eqref{eq:root-refl} in the root $\rho = \mathcal{H}_{q} - \mathcal{H}_{p}$ 
	(note also that a transposition 
	$(\mathcal{E}_{i} \mathcal{E}_{j})$ is induced by a reflection in the root $\mathcal{E}_{i} - \mathcal{E}_{j}$).
	The induced elementary 	birational mappings are then given by the following expressions:
	\begin{alignat}{2}
		\sigma_{1}&: 
		\left(\begin{matrix} a_{0} & a_{1} \\ a_{2} & a_{3} \end{matrix}\ ;\ t\ ;
		\begin{matrix} q \\ p \end{matrix}\right) 
		&&\mapsto 
		\left(\begin{matrix} a_{3} & a_{2} \\ a_{1} & a_{0}  \end{matrix}\ ;\ -t\ ;
		\begin{matrix} \displaystyle -\frac{p}{ t} \\ q t \end{matrix}\right), \\
		\sigma_{2}&: 
		\left(\begin{matrix} a_{0} & a_{1} \\ a_{2} & a_{3} \end{matrix}\ ;\ t\ ;
		\begin{matrix} q \\ p \end{matrix}\right) 
		&&\mapsto 
		\left(\begin{matrix} a_{2} & a_{1} \\ a_{0} & a_{3}  \end{matrix}\ ;\ -t\ ;
		\begin{matrix} q \\ p + t \end{matrix}\right), \\
		\sigma_{3}&: 
		\left(\begin{matrix} a_{0} & a_{1} \\ a_{2} & a_{3} \end{matrix}\ ;\ t\ ;
		\begin{matrix} q \\ p \end{matrix}\right) 
		&&\mapsto 
		\left(\begin{matrix} a_{0} & a_{3} \\ a_{2} & a_{1}  \end{matrix}\ ;\ -t\ ;
		\begin{matrix} 1-q \\ -p  \end{matrix}\right).
	\end{alignat}	
\end{theorem}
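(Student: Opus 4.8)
The plan is to realize each diagram automorphism $\sigma_{j}$ first as an explicit isometry of $\operatorname{Pic}(\mathcal{X})$ and then as a birational map of the surface family, verifying consistency at each stage. First I would take the $\operatorname{Pic}$-action asserted in the theorem---the listed permutations of the exceptional classes $\mathcal{E}_{i}$, together with the reflection $w_{\rho}$ in $\rho = \mathcal{H}_{q} - \mathcal{H}_{p}$ for $\sigma_{1}$---and check by direct computation that it induces the stated cycle actions on the surface roots $\delta_{i}$ and the symmetry roots $\alpha_{i}$, using their expressions in terms of $\mathcal{H}_{q}, \mathcal{H}_{p}, \mathcal{E}_{i}$ from Figures~\ref{fig:d-roots-d51} and~\ref{fig:a-roots-a3}. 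For instance, $\sigma_{2} = (\mathcal{E}_{1}\mathcal{E}_{3})(\mathcal{E}_{2}\mathcal{E}_{4})$ sends $\delta_{0} = \mathcal{E}_{1} - \mathcal{E}_{2} \mapsto \mathcal{E}_{3} - \mathcal{E}_{4} = \delta_{1}$ and $\alpha_{0} = \mathcal{H}_{p} - \mathcal{E}_{1} - \mathcal{E}_{2} \mapsto \mathcal{H}_{p} - \mathcal{E}_{3} - \mathcal{E}_{4} = \alpha_{2}$, confirming $\sigma_{2} = (\alpha_{0}\alpha_{2}) = (\delta_{0}\delta_{1})$. Using the reflection formula \eqref{eq:root-refl} one computes $w_{\rho}(\mathcal{H}_{q}) = \mathcal{H}_{p}$, so the extra factor in $\sigma_{1}$ exchanges the two ruling classes. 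One then verifies that each $\sigma_{j}$ preserves the intersection form \eqref{eq:int-form} (automatic for a composition of permutations and $w_{\rho}$) and fixes the anti-canonical class $\delta = -\mathcal{K}_{\mathcal{X}}$, so that each is a genuine automorphism of the $A_{3}^{(1)}$ and $D_{5}^{(1)}$ diagrams.

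The heart of the proof is constructing the birational realizations, and the surface-root action already constrains their form. Since $\sigma_{2}$ and $\sigma_{3}$ fix $\delta_{2}$ and $\delta_{3}$, they preserve the lines $d_{2} = \{q = \infty\}$ and $d_{3} = \{p = \infty\}$, and hence are induced by affine maps $(q,p)\mapsto(\lambda q + \mu, \zeta p + \xi)$; by contrast $\sigma_{1}$ swaps $\delta_{2} \leftrightarrow \delta_{3}$ (consistent with $w_{\rho}$ exchanging $\mathcal{H}_{q} \leftrightarrow \mathcal{H}_{p}$), so it interchanges the two rulings and is induced by a map swapping the $q$- and $p$-coordinates. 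The remaining scalars and the parameter evolution are then pinned down by requiring that the map carry the parameterized point configuration \eqref{eq:d4-root_var_par} to its $\overline{\mathbf{b}}$-evolved version, matching the blown-up points according to the prescribed permutation of the $\mathcal{E}_{i}$. For example, $\sigma_{2}$ interchanges the cascades based at $p_{1}(\infty, -t)$ and $p_{3}(\infty, 0)$: the affine $p$-map sending $-t \mapsto 0$ together with $t \mapsto -t$ and the root-variable swap $(a_{0}, a_{2}) \mapsto (a_{2}, a_{0})$ yields $(q,p) \mapsto (q, p+t)$, and $\sigma_{3}$ and the ruling-swapping $\sigma_{1}$ are determined analogously.

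The final and most delicate step is confirming that each candidate map induces the posited action on all of $\operatorname{Pic}(\mathcal{X})$, not merely on the leading ruling classes. This requires tracking the infinitely-near (second-level) points of each blowup cascade through the charts $(u_{i}, v_{i})$ and $(U_{j}, V_{j})$ under the coordinate change, checking that the second exceptional divisor in each cascade lands on the correct $\mathcal{E}_{i}$. I expect this bookkeeping---together with fixing all the signs and scalings, notably the $t \mapsto -t$ reversals and the precise $(q,p) \mapsto (-p/t,\, qt)$ form of $\sigma_{1}$, where the ruling swap interacts non-trivially with the cascade structure---to be the main obstacle. Throughout one verifies that the parameter evolution preserves the normalization $a_{0} + a_{1} + a_{2} + a_{3} = 1$ and $b_{3} = b_{5} = 0$, $b_{7} = 1$, which provides a useful consistency check, and as an additional confirmation one can check that the resulting maps satisfy $\sigma_{j} w_{i} \sigma_{j}^{-1} = w_{\sigma_{j}(i)}$ against Theorem~\ref{thm:bir-weyl-d5}. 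The computations are routine in the sense of Sakai's theory (cf.~\cite{KajNouYam:2017:GAOPE}, \cite{DzhTak:2018:OSAOSGTODPE}), and once the realizations are verified one obtains the asserted birational representation.
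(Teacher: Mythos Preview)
Your proposal is correct and follows the standard approach that the paper itself invokes but does not spell out: the paper simply states the theorem and remarks in the Appendix preamble that ``the computations here are standard (see \cite{KajNouYam:2017:GAOPE}, \cite{DzhTak:2018:OSAOSGTODPE}, \cite{DzhFilSto:2019:RCDOPHWDPE}) and are mostly omitted.'' Your outline---verifying the $\operatorname{Pic}$-action on the root bases, determining the affine or ruling-swapping form of each birational map from the action on $\delta_{2},\delta_{3}$, pinning down the scalars by matching the point configuration~\eqref{eq:d4-root_var_par}, and tracking the infinitely-near points through the blowup charts---is exactly the procedure those references carry out in detail, so there is nothing to compare against beyond noting that you have supplied what the paper leaves implicit.
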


Finally, the semi-direct product structure is defined by the action of $\sigma \in \operatorname{Aut}\left(A_{3}^{(1)}\right)$ on 
$W\left(A_{3}^{(1)}\right)$ via $w_{\sigma(\alpha_{i})} = \sigma w_{\alpha_{i}} \sigma^{-1}$.

\subsection{Some standard discrete d-$\dPain{A_{3}^{(1)}/D_{5}^{(1)}}$ equations} 
\label{sub:some_standard_discrete_d_dpain_a__3_1_d__5_1_equations}
There are infinitely many different discrete Painlev\'e equations of the same type corresponding to the non-conjugate 
translations in the affine symmetry sub-lattice $Q$. Of those, we are interested in two particular equations that correspond to short translation 
vectors. One is equation~(8.23) in \cite[Section 8.1.17]{KajNouYam:2017:GAOPE}, the other is the so-called d-$\Pain{IV}$ equation in 
\cite{Sak:2001:RSAWARSGPE}, which also appears in a slightly different form (2.33--2.34) in \cite{Sak:2007:PDPEATLF}. We label these equations 
by $\mathbf{[\overline{1}1 \overline{1}1]}$ and $\mathbf{[\overline{1}001]}$ respectively, based on the induced action of the dynamics on 
the symmetry roots (see below), which is unambiguous. In the above references 
these equations are presented in a geometric way as mappings, similar to our approach. However, both classes of equations were obtained earlier by 
Basil~Grammaticos, Alfred~Ramani, and their collaborators using the singularity confinement approach; in their papers these equations are presented as 
recurrences with particular coefficient evolution. Equation $\mathbf{[\overline{1}1 \overline{1}1]}$ first appeared in 
\cite{GraNijPap:1994:LSDPE} (where it was shown that this equation actually has d-$\Pain{III}$ and not d-$\Pain{IV}$ as a continuous limit)  
and equation $\mathbf{[\overline{1}001]}$ first appeared in  \cite{GraOhtRam:1998:DTCOTQVE}; see also \cite{TokGraRam:2002:CDPE} where 
equations (3.1--3.2) is essentially the mapping \eqref{eq:dPD5-KNY} and
equations (3.24ab) is essentially the mapping \eqref{eq:dPD5-Sakai}\footnote{We thank A.~Ramani for his help with historical references.}. 

Note that equations $\mathbf{[\overline{1}1 \overline{1}1]}$ and $\mathbf{[\overline{1}001]}$ 
\emph{are not equivalent} --- this can be seen, for example, from the length of the corresponding words in the 
extended affine Weyl group, or from the lengths of the corresponding translations, or, probably in the simplest possible way, by computing the 
Jordan form of the matrix description of the evolution on $\operatorname{Pic}(\mathcal{X})$.

\subsubsection{The $[\mathbf{\overline{1}1 \overline{1}1}]$ discrete Painlev\'e equation on the $D_{5}^{(1)}$ surface} 
\label{ssub:kajiwara_noumi_yamada_or_d_pain_iii_equation_on_the_d__5_1_surface}
In \cite{KajNouYam:2017:GAOPE}, the standard example of a discrete Painlev\'e equation on the $D_{5}^{(1)}$-surface is given in Section~8.1.17
equation~(8.23), and it has the following \emph{additive} form, when written in coordinates $(q,p)$:
\begin{equation}\label{eq:dPD5-KNY}
	\overline{q} + q = 1 - \frac{a_{2}}{p} - \frac{a_{0}}{p+t},\qquad 
	p + \underline{p} = -t + \frac{a_{1}}{q} + \frac{a_{3}}{q-1}
\end{equation}
with the root variable evolution and normalization given by 
\begin{equation}\label{eq:dPD5-rv-evol}
	\overline{a}_{0} = a_{0} + 1, \quad \overline{a}_{1} = a_{1}-1, \quad \overline{a}_{2} = a_{2} + 1,\quad \overline{a}_{3} = a_{3} - 1,\qquad
	 a_{0} + a_{1} + a_{2}  + a_{3} = 1.
\end{equation}
For this equation, the geometry of the corresponding point configuration is shown on Figure~\ref{fig:surface-d5}, with the parameterization 
by the root variables is given by \eqref{eq:d4-root_var_par}. 
From the root variable evolution \eqref{eq:dPD5-rv-evol} we immediately see that the corresponding translation on the root lattice is 
\begin{equation}\label{eq:dPD5-transl-KNY}
	\varphi_{*}: \upalpha =  \langle \alpha_{0}, \alpha_{1}, \alpha_{2}, \alpha_{3} \rangle
	\mapsto \varphi_{*}(\upalpha) = \upalpha + \langle -1,1,-1,1 \rangle \delta,
\end{equation}
which explains our labeling for this equation (we use $\overline{1}$ instead of $-1$ for compactness).
Using the standard techniques, see \cite{DzhTak:2018:OSAOSGTODPE} for a detailed example, we get the following decomposition of $\varphi$ in 
terms of the generators of $\widetilde{W}\left(A_{3}^{(1)}\right)$:
\begin{equation}\label{eq:dPD5-decomp-KNY}
	\varphi = \sigma_{3}\sigma_{2} w_{3} w_{1} w_{2} w_{0}.
\end{equation}
Note that equations \eqref{eq:dPD5-KNY} naturally define two \emph{half-maps}, $\varphi_{1}: (q,p)\to(\overline{q},-p)$ and 
$\varphi_{2}:(q,p)\to(q,-\underline{p})$ (the additional negative sign here is related to the M\"obius group gauge action as explained in 
\cite[Section~2.9]{DzhFilSto:2019:RCDOPHWDPE}), and the mapping $\varphi$ that we are interested in is 
$\varphi = (\overline{\varphi_{2}})^{-1}\circ \varphi_{1}$. These individual mappings decompose as $\varphi_{1} = \sigma_{3}w_{2}w_{0}$
and $\varphi_{3} = \sigma_{2}w_{3}w_{1}$.

\subsubsection{The $[\mathbf{\overline{1}001}]$ discrete Painlev\'e equation  on the $D_{5}^{(1)}$ surface} 
\label{ssub:sakai_or_d_pain_iv_equation_on_the_d__5_1_surface}
\begin{figure}[ht]
	\begin{tikzpicture}[>=stealth,basept/.style={circle, draw=red!100, fill=red!100, thick, inner sep=0pt,minimum size=1.2mm}]
		\begin{scope}[xshift = 0cm]
			\draw [black, line width = 1pt]  	(4,0) 	-- (-0.5,0) 	node [left] {$H_{g}$} node[pos=0, right] {$g=0$};
			\draw [black, line width = 1pt] 	(4,2.5) -- (-0.5,2.5)	node [left] {$H_{g}$} node[pos=0, right] {$g=\infty$};
			\draw [black, line width = 1pt] 	(0,3) -- (0,-0.5)		node [below] {$H_{f}$} node[pos=0, above, xshift=-7pt] {$f=0$};
			\draw [black, line width = 1pt] 	(3,3) -- (3,-0.5)		node [below] {$H_{f}$} node[pos=0, above, xshift=7pt]  {$f=\infty$};
			
			\node (q1) at (0,2.5) 		[basept,label={[left, yshift = -8pt] 	$\pi_{1}$}] {};
			\node (q2) at (0.7,3) 		[basept,label={[above]	 		$\pi_{2}$}] {};
			\node (q3) at (1.3,3)	 	[basept,label={[above]	 		$\pi_{3}$}] {};
			\node (q4) at (2,2.5) 		[basept,label={[yshift=-18pt] 			$\pi_{4}$}] {};
			\node (q5) at (2,3) 		[basept,label={[above]	 		$\pi_{5}$}] {};
			\node (q6) at (3,0.8)		[basept,label={[right] $\pi_{6}$}] {};
			\node (q7) at (3,1.6)		[basept,label={[right] $\pi_{7}$}] {};
			\node (q8) at (0,0) 		[basept,label={[left, yshift = -8pt] 	$\pi_{8}$}] {};
			\draw [line width = 0.8pt, ->] (q3) edge (q2) (q2) edge (q1);
			\draw [line width = 0.8pt, ->] (q5) edge (q4);
		\end{scope}
	
		\draw [->] (6.3,1.5)--(4.3,1.5) node[pos=0.5, below] {$\operatorname{Bl}_{\pi_{1}\cdots \pi_{8}}$};
	
		\begin{scope}[xshift = 7.5cm]
			\draw [blue, line width = 1pt] 	(0,3) -- (0,-0.5)		node [below] {$H_{f} - K_{1} - K_{8}$};
			\draw [blue, line width = 1pt] 	(4,3) -- (4,-0.5)		node [below] {$H_{f} - K_{6} - K_{7}$};
			\draw [red, line width = 1 pt] (3.8,0.6) -- (4.6,1) node [right] {$K_{6}$};
			\draw [red, line width = 1 pt] (3.8,1.4) -- (4.6,1.8) node [right] {$K_{7}$};
			
			\draw [red, line width = 1pt ] (1.5,-1) .. controls (1.7,-0.5) and (1.9,0) .. (4.5,0) node [right] {$H_{g} - K_{8}$};
			\draw [red, line width = 1pt ] (-0.5,1)  -- (2,-1) node [right] {$K_{8}$};
			\draw [blue, line width = 1pt ] (-0.5,1.5)  -- (2,3.5) node [right] {$K_{1} - K_{2}$};
			\draw [blue, line width = 1pt ] (0.5,3)  -- (1.5,1) node [below, xshift=-8pt] {$K_{2} - K_{3}$};
			\draw [blue, line width = 1pt ] (3.5,3)  -- (2.5,1) node [below, xshift=8pt] {$K_{4} - K_{5}$};
			\draw [red, line width = 1pt ] (0.7,1.2)  -- (1.5,2) node [above] {$K_{3}$};
			\draw [red, line width = 1pt ] (3.3,1.2)  -- (2.5,2) node [above] {$K_{5}$};
			\draw [blue, line width = 1pt ] (1.5,3.5) .. controls (1.7,3) and (1.9,2.5) .. (4.5,2.5) node [right] {$H_{g} - K_{1} - K_{4}$};
			
		\end{scope}
	\end{tikzpicture}
	\caption{The Sakai Surface for the d-$\Pain{IV}$ example}
	\label{fig:surface-d5-Sakai}
\end{figure}

In \cite{Sak:2001:RSAWARSGPE}, the following mapping $\phi:(f,g)\to(\overline{f},\overline{g})$, 
written in the \emph{multiplicative-additive} form, is called a d-$\Pain{IV}$ equation on the $D_{5}^{(1)}$ surface:
\begin{equation}\label{eq:dPD5-Sakai}
	\overline{f} f = \frac{s \overline{g}}{(\overline{g} - a_{3} + \lambda)(\overline{g} + a_{0} + \lambda)},\qquad 
	\overline{g} + g = \frac{s}{f} + \frac{a_{1}+ a_{0}}{1 - f} - \lambda + a_{3} - a_{0},
\end{equation}
where $\lambda = a_{0} + a_{1} + a_{2} + a_{3}$ (without loss of generality it can be normalized to $\lambda = 1$), 
and the root variable evolution is given by $\overline{a}_{0} = a_{0} + \lambda$ and $\overline{a}_{3} = a_{3} - \lambda$. From the
root variable evolution we see that the corresponding translation on the root lattice is 
\begin{equation}\label{eq:dPD5-transl-Sakai}
	\phi_{*}: \upalpha =  \langle \alpha_{0}, \alpha_{1}, \alpha_{2}, \alpha_{3} \rangle
	\mapsto \varphi_{*}(\upalpha) = \upalpha + \langle -1,0,0,1 \rangle \delta.
\end{equation} This map can be written in terms of generators as 
\begin{equation}\label{eq:dPD5-decomp-Sakai}
	\phi = \sigma_{3}\sigma_{1}w_{2}w_{1}w_{0} = w_{3} w_{2} w_{1} \sigma_{3} \sigma_{1},
\end{equation}
which is the same as given in Sakai's paper.
However, the geometry of that example is slightly different from our reference model on Figure~\ref{fig:surface-d5} 
and is given on Figure~\ref{fig:surface-d5-Sakai}. This geometry can be 
matched to the standard one with the change of basis on $\operatorname{Pic}(\mathcal{X})$ given by 
\begin{align*}\label{eq:change-basis-Sakai-KNY}
	\mathcal{H}_{f} & = \mathcal{H}_{q} + \mathcal{H}_{p} - \mathcal{E}_{1} - \mathcal{E}_{2}, & \qquad 
	\mathcal{H}_{q} & = \mathcal{H}_{f} + \mathcal{H}_{g} - \mathcal{K}_{6} - \mathcal{K}_{8},\\
	\mathcal{H}_{g} & = \mathcal{H}_{q} + \mathcal{H}_{p}  - \mathcal{E}_{1} -\mathcal{E}_{7},  & \qquad 	
	\mathcal{H}_{p} &= \mathcal{H}_{f} + \mathcal{H}_{g} - \mathcal{K}_{1} - \mathcal{K}_{6},\\
	\mathcal{K}_{1} & = \mathcal{H}_{q} - \mathcal{E}_{1}, & \qquad 
	\mathcal{E}_{1} & = \mathcal{H}_{f} + \mathcal{H}_{g} - \mathcal{K}_{1} - \mathcal{K}_{6} - \mathcal{K}_{8} ,\\
	\mathcal{K}_{2} & = \mathcal{E}_{3}, & \qquad \mathcal{E}_{2} & = \mathcal{H}_{g} - \mathcal{K}_{6},\\
	\mathcal{K}_{3} & = \mathcal{E}_{4}, & \qquad \mathcal{E}_{3} & = \mathcal{K}_{2},\\
	\mathcal{K}_{4} & = \mathcal{E}_{5}, & \qquad \mathcal{E}_{4} & = \mathcal{K}_{3},\\
	\mathcal{K}_{5} & = \mathcal{E}_{6}, & \qquad \mathcal{E}_{5} & = \mathcal{K}_{4},\\
	\mathcal{K}_{6} & =  \mathcal{H}_{q} + \mathcal{H}_{p} - \mathcal{E}_{1} - \mathcal{E}_{2} - \mathcal{E}_{7}, & \qquad 
	\mathcal{E}_{6} & = \mathcal{K}_{5},\\
	\mathcal{K}_{7} & = \mathcal{E}_{8}, & \qquad \mathcal{E}_{7} & = \mathcal{H}_{f} - \mathcal{K}_{6},\\
	\mathcal{K}_{8} & = \mathcal{H}_{p} - \mathcal{E}_{1}, & \qquad \mathcal{E}_{8} & = \mathcal{K}_{7}.
\end{align*}
Note that this change of basis is chosen in such a way as to match the root variables between the two examples, however the parameters $s$ and $t$
differ by a sign, $s = -t$. The corresponding change of variables is given by 
\begin{equation*}
	\left\{\begin{aligned}
		f(q,p)& = -\frac{p+t}{(q-1)(p+t) + a_{0}},\\
		g(q,p)& = (q-1)(p+t),
	\end{aligned}\right. \qquad 
	\left\{\begin{aligned}
		q(f,g)&= 1 - \frac{g}{f(g + a_{0})},\\
		p(f,g)&= s - f(g + a_{0}).
	\end{aligned}\right. 
\end{equation*}




\section*{Acknowledgements} 
\label{sec:acknowledgements} 
Yang Chen and Hu Jie were supported by the Macau Science 
and Technology Development Fund under grant numbers FDCT 130/2014/A3, FDCT 023/2017/A1 and by the University of Macau 
under grant numbers MYRG 2014-00011-FST, MYRG 2014-00004-FST.
Part of this work was done when Anton Dzhamay visited Shanghai University and the University of Macau and he would like to thank 
both of these Universities for their support and hospitality.  
We also thank Alfred Ramani, Tomoyuki Takenawa, and Ralph Willox for helpful comments and discussions.


\bibliographystyle{amsalpha}

\begin{thebibliography}{SKKT00}

\bibitem[AB06]{AriBor:2006:MSDDPE}
D.~Arinkin and A.~Borodin, \emph{Moduli spaces of {$d$}-connections and
  difference {P}ainlev\'e equations}, Duke Math. J. \textbf{134} (2006), no.~3,
  515--556.

\bibitem[AB07]{AriBor:2007:TDITP}
D~Arinkin and A.~Borodin, \emph{Tau-function of discrete isomonodromy
  transformations and probability}, Compos. Math. \textbf{145} (2009), no.~3., 747--772

\bibitem[BB03]{BorBoy:2003:DOTFPIDOPE}
Alexei Borodin and Dmitriy Boyarchenko, \emph{Distribution of the first
  particle in discrete orthogonal polynomial ensembles}, Comm. Math. Phys.
  \textbf{234} (2003), no.~2, 287--338.

\bibitem[BC09]{BasChe:2009:PDFDLSLUE}
Estelle Basor and Yang Chen, \emph{Painlev\'{e} {V} and the distribution
  function of a discontinuous linear statistic in the {L}aguerre unitary
  ensembles}, J. Phys. A \textbf{42} (2009), no.~3, 035203, 18. 

\bibitem[Bor03]{Bor:2003:DGPADPE}
Alexei Borodin, \emph{Discrete gap probabilities and discrete {P}ainlev\'e
  equations}, Duke Math. J. \textbf{117} (2003), no.~3, 489--542. 

\bibitem[DFS19]{DzhFilSto:2019:RCDOPHWDPE}
Anton Dzhamay, Galina Filipuk, and Alexander Stokes, \emph{Recurrence
  coefficients for discrete orthogonal polynomials with hypergeometric weight
  and discrete {P}ainlev\'e equations}, 2019 (preprint).

\bibitem[DT18]{DzhTak:2018:OSAOSGTODPE}
Anton Dzhamay and Tomoyuki Takenawa, \emph{On some applications of {S}akai's
  geometric theory of discrete {P}ainlev\'e equations}, SIGMA Symmetry
  Integrability Geom. Methods Appl. \textbf{14} (2018), no.~075, 20.

\bibitem[GNP{\etalchar{+}}94]{GraNijPap:1994:LSDPE}
B.~Grammaticos, F.~W. Nijhoff, V.~Papageorgiou, A.~Ramani, and J.~Satsuma,
  \emph{Linearization and solutions of the discrete {P}ainlev\'{e} {III}
  equation}, Phys. Lett. A \textbf{185} (1994), no.~5-6, 446--452. 

\bibitem[GORS98]{GraOhtRam:1998:DTCOTQVE}
B.~Grammaticos, Y.~Ohta, A.~Ramani, and H.~Sakai, \emph{Degeneration through
  coalescence of the {$q$}-{P}ainlev\'{e} {VI} equation}, J. Phys. A
  \textbf{31} (1998), no.~15, 3545--3558.

\bibitem[KNY17]{KajNouYam:2017:GAOPE}
Kenji Kajiwara, Masatoshi Noumi, and Yasuhiko Yamada, \emph{Geometric aspects
  of {P}ainlev\'e equations}, J. Phys. A \textbf{50} (2017), no.~7, 073001,
  164. 

\bibitem[LC17]{LyuChe:2017:LEDLUE}
Shulin Lyu and Yang Chen, \emph{The largest eigenvalue distribution of the
  {L}aguerre unitary ensemble}, Acta Math. Sci. Ser. B (Engl. Ed.) \textbf{37}
  (2017), no.~2, 439--462. 

\bibitem[Sak01]{Sak:2001:RSAWARSGPE}
Hidetaka Sakai, \emph{Rational surfaces associated with affine root systems and
  geometry of the {P}ainlev\'e equations}, Comm. Math. Phys. \textbf{220}
  (2001), no.~1, 165--229.

\bibitem[Sak07]{Sak:2007:PDPEATLF}
\bysame, \emph{Problem: discrete {P}ainlev\'e equations and their {L}ax forms},
  Algebraic, analytic and geometric aspects of complex differential equations
  and their deformations. {P}ainlev\'e hierarchies, RIMS K\^oky\^uroku
  Bessatsu, B2, Res. Inst. Math. Sci. (RIMS), Kyoto, 2007, pp.~195--208.
  

\bibitem[Sha13]{Sha:2013:BAG1}
Igor~R. Shafarevich, \emph{Basic algebraic geometry 1}, {T}hird ed., Springer,
  Heidelberg, 2013, Varieties in projective space.

\bibitem[SKKT00]{SmiKahKek:2000:IAG}
Karen~E. Smith, Lauri Kahanp\"{a}\"{a}, Pekka Kek\"{a}l\"{a}inen, and William
  Traves, \emph{An invitation to algebraic geometry}, Universitext,
  Springer-Verlag, New York, 2000. 

\bibitem[Sze67]{Sze:1967:OP}
G\'{a}bor Szeg\H{o}, \emph{Orthogonal polynomials}, third ed., American
  Mathematical Society, Providence, R.I., 1967, American Mathematical Society
  Colloquium Publications, Vol. 23.

\bibitem[TGR02]{TokGraRam:2002:CDPE}
T.~Tokihiro, B.~Grammaticos, and A.~Ramani, \emph{From the continuous {$\rm
  P_V$} to discrete {P}ainlev\'{e} equations}, J. Phys. A \textbf{35} (2002),
  no.~28, 5943--5950. 

\bibitem[VA18]{Van:2018:OPAPE}
Walter Van~Assche, \emph{Orthogonal polynomials and {P}ainlev\'{e} equations},
  Australian Mathematical Society Lecture Series, vol.~27, Cambridge University
  Press, Cambridge, 2018.

\end{thebibliography}

\providecommand{\bysame}{\leavevmode\hbox to3em{\hrulefill}\thinspace}
\providecommand{\MR}{\relax\ifhmode\unskip\space\fi MR }
\newcommand{\etalchar}[1]{$^{#1}$}
\providecommand{\href}[2]{#2}

\end{document}